\documentclass[aps,onecolumn,a4paper,11pt,tightenlines,superscriptaddress,longbibliography,nofootinbib,notitlepage]{revtex4-1}

\usepackage[utf8]{inputenc}
\usepackage[T1]{fontenc}
\usepackage{lmodern}
\usepackage{amsmath,amssymb,amsfonts,mathtools,array}
\usepackage{amsthm}
\usepackage{microtype}
\usepackage{booktabs}
\usepackage{color}
\usepackage{bm}
\usepackage{graphicx}
\usepackage{placeins}
\DeclareMathOperator{\Tr}{Tr}
\newcommand{\ket}[1]{\lvert#1\rangle}
\newcommand{\bra}[1]{\langle#1\rvert}
\usepackage{tikz}
\usetikzlibrary{arrows.meta,positioning,calc,fit,shapes.geometric}
\usepackage{url}
\usepackage[hidelinks]{hyperref}
\hypersetup{pdftitle={From Symmetry to Secrecy: Covariant Classical--Quantum Wiretap Channels},pdfauthor={Masahito Hayashi and Farzin Salek}}
\newtheorem{theorem}{Theorem}[section]
\newtheorem{definition}[theorem]{Definition}
\newtheorem{lemma}[theorem]{Lemma}
\newtheorem{proposition}[theorem]{Proposition}
\newtheorem{corollary}[theorem]{Corollary}
\newtheorem{remark}[theorem]{Remark}

\DeclareMathOperator*{\argmax}{arg\,max}

\DeclareMathOperator{\supp}{supp}
\DeclareMathOperator{\Stab}{Stab}
\DeclareMathOperator{\atanh}{atanh}
\DeclareMathOperator{\Bern}{Bern}
\DeclareMathOperator{\Rep}{Rep}

\newcommand{\sM}{\mathsf{M}}
\newcommand{\sL}{\mathsf{L}}
\newcommand{\sK}{\mathsf{K}}
\newcommand{\sJ}{\mathsf{J}}

\newcommand{\bbZ}{\mathbb{Z}}

\newcommand{\bbC}{\mathbb{C}}
\newcommand{\bbE}{\mathbb{E}}

\newcommand{\calB}{\mathcal{B}}
\newcommand{\calD}{\mathcal{D}}
\newcommand{\calH}{\mathcal{H}}
\newcommand{\calN}{\mathcal{N}}
\newcommand{\calP}{\mathcal{P}}
\newcommand{\calS}{\mathcal{S}}

\newcommand{\calX}{\mathcal{X}}

\newcommand{\Pu}{P_{\mathrm{u}}}
\newcommand{\DeltaBE}{\Delta}
\newcommand{\rhoBbar}{\overline{\rho}_{B}}
\newcommand{\rhoEbar}{\overline{\rho}_{E}}

\newcommand{\Label}{\label}
\newcommand{\ach}{\mathrm{ach}}
\newcommand{\rep}{\mathrm{rep}}
\newcommand{\pair}{\mathrm{pair}}

\begin{document}

\title{From Symmetry to Secrecy:\\
Covariant Classical--Quantum Wiretap Channels}

\author{Masahito Hayashi}
\email{hmasahito@cuhk.edu.cn}
\affiliation{School of Data Science, The Chinese University of Hong Kong, Shenzhen, Longgang District, Shenzhen 518172, China}
\affiliation{International Quantum Academy, Futian District, Shenzhen 518048, China}
\affiliation{Graduate School of Mathematics, Nagoya University, Nagoya 464-8602, Japan}
\author{Farzin Salek}
\email{farzin.salek@gmail.com}
\affiliation{Dahlem Center for Complex Quantum Systems, Freie Universit\"at Berlin, Berlin, Germany}
\affiliation{Perimeter Institute for Theoretical Physics, Waterloo, ON N2L 2Y5, Canada}
\affiliation{Institute for Quantum Computing, University of Waterloo, Waterloo, ON N2L 3G1, Canada}

\begin{abstract}
For classical--quantum wiretap channels with a transitive covariant input
action, both output ensembles admit a description as unitary group orbits
of fixed seed states. We use this structure to construct finite-block
preprocessing schemes whose private information is an entropy difference
evaluated at two distributions related by group convolution.
Applying the construction to product groups allows correlations
across any fixed number of channel uses. We give a sufficient condition for
the associated uniform-input bound to equal the single-use private
information. Constant-composition channel codes and universal$_2$ hashing
achieve the resulting rates, with trace-distance leakage controlled by
sandwiched R\'enyi information. For uniform outer randomization, the
representation structure reduces the comparison-state optimization to Eve's
invariant states. For the
hybrid binary channel considered by Tikku, Berta and Renes, numerical
optimization at three and four uses gives rates above our numerical estimate
of the optimized noisy-repetition benchmark wherever that estimate is positive
on the tested grid. The improvement comes from noise distributions beyond the family of
independent physical flips and a symmetric logical flip. We also prove
nonadditivity at an explicit channel parameter using a two-use encoder, an
analytic single-use converse, and certified entropy bounds. Finally, we
relate the hybrid model to binary phase-shift keying with a fixed receiver
measurement.
\end{abstract}

\maketitle
\vspace{-2pt}

\section{Introduction}\Label{sec:introduction}

A classical--quantum (cq) wiretap channel takes a classical symbol $x$ and
delivers states $W_x^B$ and $W_x^E$ to a legitimate receiver, Bob, and an
eavesdropper, Eve. Private communication is governed by the difference
$I(V;B)-I(V;E)$, where the auxiliary variable $V$ labels the signals Alice
wishes to distinguish and $P_{X|V}$ describes her randomized encoder
\cite{CK,DevetakPrivate}. For classical wiretap channels, optimizing this
quantity for one use gives the capacity. For cq wiretap channels,
correlations across several uses can increase the rate per use. Tikku,
Berta and Renes demonstrated this even when only one receiver has a quantum
output \cite{TBR}. In their example with quantum Eve, the construction uses
noisy repetition encoding, building on structured codes for quantum key
distribution \cite{SRS}.

Randomized encoding and auxiliary variables also organize broadcast
coding. In classical models, evaluating the resulting bounds leads to
cardinality and concave-envelope questions
\cite{GA12,AGN19}. For quantum outputs, superposition and Marton coding
have been developed for two receivers \cite{SavovWilde15}, and quantum
nonunique decoding gives further constructions for three receivers
\cite{SHH26}. Confidential broadcast coding brings the encoder's local
randomness into the rate tradeoff \cite{SHF20}; recent work treats two
mutually confidential messages through likelihood encoding and bipartite
cq resolvability \cite{BGSS26}. The related problem of cq channel
resolvability, including worst-input and fixed-input formulations, is studied
in \cite{HCG25}; our operational proof below uses coding and hashing rather
than invoking its resolvability theorems. These problems share the need to choose
useful auxiliary distributions and stochastic encoders. Here we use
channel symmetry to make that choice more explicit for wiretap channels.

For $n$ channel uses, a single auxiliary variable $V$ may select a
distribution over complete input blocks $X^n$. Thus $V$ labels one complete
$n$-use randomized signal. Conditioned on the physical block $x^n$, the
channel outputs are product states; the additional freedom comes from
classical correlations in the encoder. Optimizing this freedom directly is
difficult even at small blocklengths. We use covariance to describe a
family of such encoders by two distributions on a finite group.

In classical cyclic-shift symmetric wiretap channels, translated
auxiliary distributions and channel prefixing also play a structural role
\cite{OzelUlukus13}. Here the outputs are quantum states in finite-group
unitary orbits, and the construction extends to correlated preprocessing on
$G^n$ with an explicit leakage analysis.

For a transitive covariant channel, the input alphabet can be labeled,
redundantly if necessary, by the group. Choose independent variables
$V\sim P_3$ and $Z\sim P_1$, and send $X=VZ$. Here $P_1$ supplies local
noise and $P_3$ is the auxiliary distribution. Conditioned on $V=v$, the
input is a translate of $P_1$, so covariance makes each receiver's
conditional entropy independent of $v$. The marginal input distribution is
the convolution $P_3*P_1$, and the private information is
\begin{equation}
\DeltaBE(P_3*P_1)-\DeltaBE(P_1),\qquad
\DeltaBE(P):=H\Bigl(\sum_xP(x)W_x^B\Bigr)
             -H\Bigl(\sum_xP(x)W_x^E\Bigr).
\Label{eq:intro_formula}
\end{equation}
The $n$-fold channel is covariant under $G^n$, so the same construction
allows both distributions to be correlated across a finite block. We call
this the covariant preprocessing family.

The group-orbit description was used by Hayashi to analyze quantum wiretap
coding and leakage exponents \cite{MH}, and by Wu, Long and Hayashi in
secure direct communication with balanced universal hashing \cite{WLH}.
Building on this description, we retain both distributions as optimization
variables and use the convolution formula to investigate correlated
finite-block preprocessing. Uniform outer randomization gives the simpler
bound $\DeltaBE(\Pu)-\min_P\DeltaBE(P)$, because $\Pu*P=\Pu$. We show that
this bound equals the single-use private information whenever $\Pu$
maximizes $\DeltaBE$.

We also give an operational realization of the construction. The local
noise defines an induced cq wiretap channel, a constant-composition code
transmits a randomized index, and a balanced universal$_2$ hash extracts
the secret message \cite{MH,H13,WLH,TSSR11}. Resolvability-based secrecy
constructions likewise use encoder randomness to control the
eavesdropper's output distribution \cite{HK14}. Our leakage analysis
uses sandwiched R\'enyi mutual information
\cite{Wilde_2014,M_ller_Lennert_2013,HT} and applies to arbitrary $P_3$. For uniform $P_3$,
the group representation reduces the comparison-state optimization to
invariant states; in a cyclic example the minimization can be evaluated
explicitly.

For the binary channel of \cite{TBR}, Bob receives the output of a binary
symmetric channel and Eve receives one of two pure states. We compare the
covariant construction with noisy repetition, including optimization of
the logical preprocessing. At two uses, the best pairs found agree with
repetition supplemented by a symmetric logical flip. At three and four
uses, optimizing the noise distribution beyond this family gives larger
rates at every tested point where our numerical estimate of the benchmark is positive. At three
uses, the numerically obtained noise distribution has a simple interpretation: the
physical flip probability depends on whether the logical bit was flipped.
The four-use example requires a broader noise distribution. The numerical
section includes explicit feasible pairs and distinguishes the benchmark
optimizations.

Finally, we give a rigorous two-use nonadditivity example at
$p=621/5000$. An analytic convexity argument proves that the single-use
private information vanishes, while exact rational entropy bounds certify
a positive two-use rate. This provides an explicit instance of the
phenomenon established in \cite{TBR}. We also obtain the hybrid binary
channel from binary phase-shift keying when Bob uses a fixed measurement.

Sections~\ref{sec:preliminaries}--\ref{sec:covariant_wiretap} develop the
group description and preprocessing bound. Section~\ref{sec:protocol}
proves achievability and leakage bounds. The examples and numerical
comparisons are in Sections~\ref{sec:examples}--\ref{sec:TBR}, followed by
the nonadditivity witness and optical realization in
Sections~\ref{sec:witness}--\ref{sec:bpsk}.

\section{The group picture of a covariant cq channel}\Label{sec:preliminaries}

\subsection{Notation}
All Hilbert spaces are finite-dimensional except in the bosonic examples of
Section~\ref{sec:bpsk}.  We write $\calD(\calH)$ for the density operators
on $\calH$ and $H(\rho)=-\Tr\rho\log\rho$ for the von Neumann entropy; all
logarithms are natural, so information quantities are in nats, and
$h(x)=-x\log x-(1-x)\log(1-x)$ is the binary entropy.  For a classical
register $X$ with alphabet $\calX$ and a map $x\mapsto\rho^x_B$, the cq state
is $\rho_{XB}=\sum_x P_X(x)\ket{x}\!\bra{x}\otimes\rho^x_B$, and $I(X;B)$
always refers to such a state.  For a finite group $G$, $\calP(G)$ denotes
the probability distributions on $G$ and $\Pu$ the uniform distribution.

\subsection{Symmetries of a classical input alphabet}\Label{subsec:cq_covariance}

A cq channel maps a classical symbol $x$ to a density operator $W_x$.
Its quantum extension
$\calN(\rho)=\sum_x\bra{x}\rho\ket{x}W_x$ is defined relative to a
preferred basis. A symmetry of the classical input model must preserve
the diagonal algebra. It therefore maps each rank-one basis projector to
another such projector and has the form
\begin{equation}
U\ket{x}=e^{i\theta(x)}\ket{\pi(x)},
\Label{eq:monomial}
\end{equation}
where $\pi$ is a permutation. The phases have no effect on classical
states. This requirement belongs to the input model: covariance of a
quantum extension alone does not impose it, as a constant channel can be
covariant under arbitrary input unitaries.

\begin{definition}[Covariant cq channel]\Label{def:covariant}
Let a finite group $G$ act on the alphabet $\calX$ by permutations,
$(g,x)\mapsto g\cdot x$, and let $g\mapsto V_g$ be a unitary, possibly
projective, representation of $G$ on the output space.  The cq channel
$x\mapsto W_x$ is \emph{covariant} if
\begin{equation}
W_{g\cdot x}=V_g\,W_x\,V_g^\dagger\qquad\text{for all }g\in G,\ x\in\calX.
\Label{eq:cq_covariance}
\end{equation}
For a wiretap channel we require this simultaneously for Bob's outputs,
with a representation $U_B$, and for Eve's, with a representation $U_E$,
and the same permutation action on the common input alphabet.
\end{definition}

\begin{figure}[t]
\centering
\begin{tikzpicture}[node distance=1.35cm and 4.1cm,arr/.style={-Latex}]
\node (x) {$x$};
\node[right=of x] (w) {$W_x$};
\node[below=of x] (gx) {$g\cdot x$};
\node[right=of gx] (gw) {$W_{g\cdot x}=V_gW_xV_g^\dagger$};
\draw[arr] (x)--node[above]{channel}(w);
\draw[arr] (gx)--node[above]{channel}(gw);
\draw[arr] (x)--node[left]{permutation}(gx);
\draw[arr] (w)--node[right]{unitary conjugation}(gw);
\end{tikzpicture}
\caption{Covariance: permuting the classical input labels and conjugating
the output states give the same channel outputs.}
\Label{fig:cq_covariance}
\end{figure}

\subsection{Transitive symmetries, seed states and the regular action}\Label{subsec:orbit}

If the action of $G$ on $\calX$ is transitive, choose a reference symbol
$x_0$ and its output $\rho=W_{x_0}$ as a seed. Every symbol is $g\cdot x_0$
for some $g$, so covariance determines the channel from the seed.
Different group elements label the same symbol precisely when they lie
in the same coset of its stabilizer. The following statement makes the
resulting group description precise.

\begin{proposition}[Orbit structure]\Label{prop:orbit}
Let a finite group $G$ act transitively on a finite alphabet $\calX$.  For
$R\in\{B,E\}$ let $U_R$ be a possibly projective unitary representation of
$G$ with $W^R_{g\cdot x}=U_{R,g}W^R_xU_{R,g}^\dagger$.  Fix $x_0\in\calX$,
let $H=\Stab(x_0)$ and set $\rho_R:=W^R_{x_0}$.  Then $\calX\simeq G/H$ and
\begin{equation}
W^R_{g\cdot x_0}=U_{R,g}\,\rho_R\,U_{R,g}^\dagger,\qquad
U_{R,h}\,\rho_R\,U_{R,h}^\dagger=\rho_R\quad(h\in H).
\Label{eq:orbit_normal_form}
\end{equation}
Conversely, any pair of seed states invariant under $U_R|_H$ defines through
\eqref{eq:orbit_normal_form} a covariant wiretap channel on $G/H$.  Moreover,
labeling the outputs redundantly by $g\in G$ instead of by $gH\in G/H$
changes none of the finite-use private-information quantities defined in
Section~\ref{sec:wiretap_setting}.
\end{proposition}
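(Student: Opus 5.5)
The plan has four parts: the orbit-stabilizer bijection, the covariance formula, the converse, and the invariance of private information under relabeling. First, I establish $\calX\simeq G/H$ with the standard orbit--stabilizer argument. The map $gH\mapsto g\cdot x_0$ is well defined because $h\cdot x_0=x_0$ for $h\in H$. It is injective since $g\cdot x_0=g'\cdot x_0$ implies $g^{-1}g'\in H$, and it is surjective by transitivity. The first identity in \eqref{eq:orbit_normal_form} is then \eqref{eq:cq_covariance} with $x=x_0$. The second follows by taking $g=h\in H$ and using $h\cdot x_0=x_0$. For a projective representation, $U_{R,g}$ is determined only up to a phase, and that phase cancels under conjugation, so nothing changes.

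For the converse, I start from seeds $\rho_R$ with $U_{R,h}\rho_RU_{R,h}^\dagger=\rho_R$ and define $W^R_{gH}:=U_{R,g}\rho_RU_{R,g}^\dagger$. The key check is well-definedness. From $U_{R,gh}\propto U_{R,g}U_{R,h}$ (a phase, in the projective case),
\[
U_{R,gh}\rho_RU_{R,gh}^\dagger=U_{R,g}U_{R,h}\rho_RU_{R,h}^\dagger U_{R,g}^\dagger=U_{R,g}\rho_RU_{R,g}^\dagger .
\]
Covariance under the left action $k\cdot gH=kgH$ follows in the same way from $U_{R,kg}\propto U_{R,k}U_{R,g}$. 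Both receivers share the same permutation action, so this defines a covariant wiretap channel in the sense of Definition~\ref{def:covariant}.

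The redundant-labeling claim is where the real content lies. I expect the main obstacle to be that the quantities of Section~\ref{sec:wiretap_setting} are not yet stated precisely. I will assume they are suprema over finite-use encoders $P_{V}P_{X^n|V}$ of $\frac1n[I(V;B^n)-I(V;E^n)]$, possibly regularized. Let $\phi:G\to G/H$ be the quotient map and $\tilde W_g:=W_{\phi(g)}$ the redundantly labeled channel, with $\phi^n$ acting componentwise.
\begin{itemize}
\item Any encoder for the $G/H$-labeled channel lifts to $G^n$ by choosing a section $s:G/H\to G$, with $P_{\tilde X^n|V}:=P_{X^n|V}\circ(s^{n})^{-1}$. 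This produces identical joint states of $V$, $B^n$ and $E^n$.
\item Any encoder on $G^n$ pushes forward through $\phi^n$ to an encoder on $(G/H)^n$. Because $\tilde W_{g^n}=W_{\phi^n(g^n)}$, the conditional output states given $V$ are again identical.
\end{itemize}
The cq state on $V$, $B^n$ and $E^n$ is therefore the same in both directions, so every mutual-information difference, and hence every supremum, coincides. The only care needed is that the private-information expression depends on the encoder only through the cq state on $V$, $B^n$ and $E^n$, so the specific definitions do not matter. The same argument applies verbatim to encoders restricted to product or convolution families: $\phi$ intertwines the left $G$-actions, so a translate $gP$ pushes forward to the translate $g\,\phi_*P$.
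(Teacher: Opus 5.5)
Your proposal is correct and follows the paper's proof essentially step for step: $H$-invariance of the seed from covariance at $x_0$, well-definedness on cosets via cancellation of the projective phase in $U_{R,gh}=\omega(g,h)U_{R,g}U_{R,h}$, and, for the last claim, lifting by coset representatives and pushing forward through $g\mapsto g\cdot x_0$, which leaves the cq states on $VB^n$ and $VE^n$ unchanged. The only additions are that you write out the orbit--stabilizer bijection $\calX\simeq G/H$ and the covariance check for the converse, both of which the paper treats as immediate.
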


\begin{proof}
Covariance at the reference symbol gives the first identity in
\eqref{eq:orbit_normal_form}; covariance under the stabilizer gives the
second, since $h\cdot x_0=x_0$.  For the converse, the right-hand side of
the first identity must not depend on the representative of the coset
$gH$; if $g'=gh$ with $h\in H$ then
$U_{R,g'}\rho_RU_{R,g'}^\dagger
=U_{R,g}(U_{R,h}\rho_RU_{R,h}^\dagger)U_{R,g}^\dagger
=U_{R,g}\rho_RU_{R,g}^\dagger$, where the projective phase of
$U_{R,gh}=\omega(g,h)U_{R,g}U_{R,h}$ cancels in the conjugation.  Covariance
of the channel so defined is immediate.  For the last claim, a joint
distribution of an auxiliary variable and an input block in $(G/H)^n$ lifts
to one on $G^n$ by choosing one representative for each symbol, and a
distribution on $G^n$ pushes forward to $(G/H)^n$ through
$g\mapsto g\cdot x_0$; both operations leave the induced cq states on
$VB^n$ and $VE^n$ unchanged, because those states depend on the input only
through the output states $W^R_x$.
\end{proof}

Two remarks put the proposition in context.  First, transitivity is
essential.  Under a nontransitive action the channel has one seed per orbit,
and the orbits need not be related at all: the three outputs
$\ket0\!\bra0$, $\ket1\!\bra1$ and $I/2$ are covariant under the $\bbZ_2$
action that swaps the first two labels and fixes the third, with the Pauli
$X$ acting on the output, yet they cannot all be unitary conjugates of one
seed because their spectra differ.  Second, the redundancy in the last
sentence of the proposition is harmless and useful.  By labeling the
alphabet redundantly with $G$ itself, one label for each group element, we
may assume throughout that $\calX=G$, that $G$ acts on $\calX$ by left
translation, and that
\begin{equation}
W^B_g=U_{B,g}\,\rho_B\,U_{B,g}^\dagger,\qquad
W^E_g=U_{E,g}\,\rho_E\,U_{E,g}^\dagger\qquad(g\in G).
\Label{eq:group_orbit_channel}
\end{equation}
This is the class of \emph{group-orbit channels} studied in \cite{MH,WLH},
and it is the setting of the rest of the paper.

Left translation on $G$ is transitive and free: the equation $gx=y$ has
the unique solution $g=yx^{-1}$. Such an action is called \emph{regular}.
Thus the redundant group labeling lets us work with ordinary translations
of probability distributions. Seeds inherited from a channel on $G/H$
retain their $H$-invariance; arbitrary seeds also define channels on $G$.
The results below allow projective representations, since their phases
cancel under conjugation.

\subsection{Translates, convolutions and the invariance of entropies}\Label{subsec:convolution}

For $P\in\calP(G)$ and $v\in G$, the \emph{left translate} of $P$ by $v$ is
the distribution of $vZ$ when $Z\sim P$,
\begin{equation}
(L_vP)(x):=P(v^{-1}x).
\end{equation}
For $P_3,P_1\in\calP(G)$ the \emph{convolution} is the distribution of the
product $VZ$ of independent $V\sim P_3$ and $Z\sim P_1$,
\begin{equation}
(P_3*P_1)(x):=\sum_{v\in G}P_3(v)\,P_1(v^{-1}x)=\sum_{v\in G}P_3(v)\,(L_vP_1)(x).
\Label{eq:convolution}
\end{equation}
The convolution is thus the $P_3$-mixture of the translates of $P_1$.  The
uniform distribution is absorbing, $\Pu*P=P*\Pu=\Pu$, and a point mass at
the identity is the unit.  For a group-orbit channel
\eqref{eq:group_orbit_channel} and $P\in\calP(G)$ we write
\begin{align}
\rhoBbar(P)&:=\sum_{g\in G}P(g)\,U_{B,g}\rho_BU_{B,g}^\dagger,&
\rhoEbar(P)&:=\sum_{g\in G}P(g)\,U_{E,g}\rho_EU_{E,g}^\dagger,
\Label{eq:averaged_outputs}\\
\DeltaBE(P)&:=H(\rhoBbar(P))-H(\rhoEbar(P)).&&
\Label{eq:Delta_def}
\end{align}
The function $\DeltaBE$ is the entropy difference between Bob's and Eve's
averaged output states; it is the only channel-dependent object entering the private-information
bounds.

\begin{lemma}[Translation invariance]\Label{lem:translation}
For every $P\in\calP(G)$ and $v\in G$,
\[
\rhoBbar(L_vP)=U_{B,v}\,\rhoBbar(P)\,U_{B,v}^\dagger,\qquad
\rhoEbar(L_vP)=U_{E,v}\,\rhoEbar(P)\,U_{E,v}^\dagger,
\]
and consequently $\DeltaBE(L_vP)=\DeltaBE(P)$.
\end{lemma}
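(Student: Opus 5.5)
The plan is a direct computation from the definition \eqref{eq:averaged_outputs}, followed by unitary invariance of the von Neumann entropy. We treat Bob's system; Eve's is identical with $U_E$ and $\rho_E$ in place of $U_B$ and $\rho_B$.

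\textbf{Step 1: change of variables.} Substituting $(L_vP)(g)=P(v^{-1}g)$ gives
\[
\rhoBbar(L_vP)=\sum_{g\in G}P(v^{-1}g)\,U_{B,g}\rho_BU_{B,g}^\dagger .
\]
Since left multiplication by $v$ is a bijection of $G$, we reindex with $g=vh$:
\[
\rhoBbar(L_vP)=\sum_{h\in G}P(h)\,U_{B,vh}\rho_BU_{B,vh}^\dagger .
\]

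\textbf{Step 2: factor out $U_{B,v}$.} For a genuine representation, $U_{B,vh}=U_{B,v}U_{B,h}$. For a projective one, $U_{B,vh}=\omega(v,h)U_{B,v}U_{B,h}$ with $|\omega(v,h)|=1$, and this phase cancels against its conjugate in the map $X\mapsto U_{B,vh}XU_{B,vh}^\dagger$. This is the same cancellation used in the proof of Proposition~\ref{prop:orbit}. Each term therefore equals $U_{B,v}\bigl(U_{B,h}\rho_BU_{B,h}^\dagger\bigr)U_{B,v}^\dagger$. By linearity of conjugation we can pull $U_{B,v}$ out of the sum, which yields the claimed identity $\rhoBbar(L_vP)=U_{B,v}\,\rhoBbar(P)\,U_{B,v}^\dagger$.

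\textbf{Step 3: entropies.} The von Neumann entropy depends only on the spectrum, and unitary conjugation preserves the spectrum. Hence
\[
H(\rhoBbar(L_vP))=H(\rhoBbar(P)),\qquad H(\rhoEbar(L_vP))=H(\rhoEbar(P)).
\]
Subtracting these two equalities gives $\DeltaBE(L_vP)=\DeltaBE(P)$ by \eqref{eq:Delta_def}.

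There is no real obstacle here. The only point needing care is the projective phase in Step 2, and the cocycle cancellation above handles it.
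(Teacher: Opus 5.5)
Your proof is correct and follows essentially the same route as the paper: reindex the sum with $g=vh$, factor $U_{B,vh}$ as $U_{B,v}U_{B,h}$ with the projective phase cancelling under conjugation, and then use unitary invariance of the spectrum to get equality of the entropies. Your explicit handling of the cocycle $\omega(v,h)$ just spells out what the paper states in one parenthetical remark.
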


\begin{proof}
Substituting $x=vz$ in the definition and using $U_{B,vz}=U_{B,v}U_{B,z}$
(up to a phase for a projective representation, which cancels in the
conjugation),
\[
\rhoBbar(L_vP)=\sum_{z\in G}P(z)\,U_{B,vz}\rho_BU_{B,vz}^\dagger
=U_{B,v}\Bigl(\sum_{z\in G}P(z)\,U_{B,z}\rho_BU_{B,z}^\dagger\Bigr)U_{B,v}^\dagger
=U_{B,v}\,\rhoBbar(P)\,U_{B,v}^\dagger,
\]
and likewise for Eve.  Unitary conjugation does not change the spectrum,
hence not the entropy.
\end{proof}

For the preprocessing family $\{L_vP_1\}_{v\in G}$, the conditional
entropies are therefore constant, whereas the mixture $P_3*P_1$ can have
different entropies. Their difference gives the private information.

\section{cq wiretap channels and the finite-use private information}\Label{sec:wiretap_setting}

A cq wiretap channel is a pair $W_B=\{W^B_x\}_{x\in\calX}$,
$W_E=\{W^E_x\}_{x\in\calX}$ of cq channels with a common classical input,
received by Bob on $\calH_B$ and by Eve on $\calH_E$
(Fig.~\ref{fig:wiretap_channel}).  Given an auxiliary classical system $V$
and a joint distribution $P_{VX}$, the relevant states are
\begin{align}
\rho_{VXB}(P_{VX})
&:=\sum_{v,x}P_{VX}(v,x)\,\ket{v}\!\bra{v}\otimes\ket{x}\!\bra{x}\otimes W^B_x,\\
\rho_{VXE}(P_{VX})
&:=\sum_{v,x}P_{VX}(v,x)\,\ket{v}\!\bra{v}\otimes\ket{x}\!\bra{x}\otimes W^E_x,
\end{align}
and the two mutual informations $I(V;B)$ and $I(V;E)$ are evaluated on them.
For $n\ge1$ let $W^{(n)}_B=\{W^B_{x_1}\otimes\cdots\otimes W^B_{x_n}\}_{x^n}$
and similarly $W^{(n)}_E$.  The $n$-use private information is
\begin{equation}
C^{(n)}(W_B,W_E):=\sup_{P_{VX^n}}\bigl[I(V;B^n)-I(V;E^n)\bigr],
\Label{eq:Cn_def}
\end{equation}
with the mutual informations evaluated on the states induced by
$P_{VX^n}$ and the $n$-fold channel.  The private capacity is
$\lim_n C^{(n)}/n$ \cite{DevetakPrivate}.  Taking products of single-use
constructions shows that
\begin{equation}
n\,C^{(1)}(W_B,W_E)\le C^{(n)}(W_B,W_E),
\Label{eq:superadditive}
\end{equation}
and the channel is said to exhibit nonadditivity at blocklength $n$ when the
inequality is strict. Conditioned on the classical block $x^n$, the
outputs are product states; the additional freedom lies in the auxiliary
variable and correlations in the classical preprocessing across the $n$
uses. A strict separation is established by an achievable $n$-use rate
per use that exceeds a converse bound on $C^{(1)}$.

\begin{figure}[t]
\centering
\begin{tikzpicture}[
    node distance=2.0cm,
    block/.style={draw, rounded corners, align=center, minimum width=2.6cm, minimum height=1cm},
    arr/.style={-Latex, thick}
]
\node[block] (alice) {Alice\\classical $X$};
\node[block, right=of alice] (channel) {cq-wiretap\\channel};
\node[block, above right=1.0cm and 2.0cm of channel] (bob) {Bob\\$W_X^B$ on $\calH_B$};
\node[block, below right=1.0cm and 2.0cm of channel] (eve) {Eve\\$W_X^E$ on $\calH_E$};
\draw[arr] (alice) -- (channel);
\draw[arr] (channel) -- (bob);
\draw[arr] (channel) -- (eve);
\end{tikzpicture}
\caption{A cq wiretap channel has one classical input and two quantum outputs.  The coding problem compares the information available to Bob with the information leaked to Eve.}
\Label{fig:wiretap_channel}
\end{figure}

For a binary input alphabet the single-use quantity has a transparent
description that we will use twice.  Write $\DeltaBE(x)$ for the entropy
difference \eqref{eq:Delta_def} of the input distribution $\Bern(x)$ that
puts probability $x$ on the symbol $1$, and let $\DeltaBE^{\mathrm{vex}}$ be
its lower convex envelope on $[0,1]$, the largest convex function below
$\DeltaBE$.

\begin{lemma}[Single-use private information of a binary-input channel]\Label{lem:binary_single_use}
For any cq wiretap channel with $\calX=\{0,1\}$,
\begin{equation}
C^{(1)}(W_B,W_E)=\max_{x\in[0,1]}\bigl[\DeltaBE(x)-\DeltaBE^{\mathrm{vex}}(x)\bigr].
\Label{eq:binary_single_use}
\end{equation}
In particular $C^{(1)}=0$ if and only if $\DeltaBE$ is convex on $[0,1]$.
\end{lemma}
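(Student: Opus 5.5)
The plan is to rewrite $I(V;B)-I(V;E)$ for a single use as a concave-envelope problem on $[0,1]$. For a joint distribution $P_{VX}$ on $\mathcal V\times\{0,1\}$, let $q_v:=P_{X|V}(1|v)$ and $x:=\sum_v P_V(v)q_v$. The conditional Bob output is the mixture $(1-q_v)W^B_0+q_vW^B_1$, and similarly for Eve. Hence
\begin{equation*}
I(V;B)-I(V;E)=\bigl[H(\rho_B)-H(\rho_E)\bigr]-\sum_vP_V(v)\bigl[H(\rho_B^v)-H(\rho_E^v)\bigr]
=\DeltaBE(x)-\sum_vP_V(v)\DeltaBE(q_v).
\end{equation*}
Here $\DeltaBE(q)$ is the entropy difference of the output states induced by $\Bern(q)$, exactly as in \eqref{eq:Delta_def}, without any group structure. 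So the objective depends on $P_{VX}$ only through the distribution $\mu$ of $q_V$ on $[0,1]$, which has barycenter $x$. Every finitely supported $\mu$ on $[0,1]$ arises this way, by taking $V$ to index its atoms.

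The supremum in \eqref{eq:Cn_def} for $n=1$ therefore equals
\begin{equation*}
\sup_{x\in[0,1]}\Bigl[\DeltaBE(x)-\inf\Bigl\{\textstyle\sum_i\lambda_i\DeltaBE(q_i):\ \sum_i\lambda_i q_i=x\Bigr\}\Bigr].
\end{equation*}
The inner infimum over finite convex combinations is by definition the lower convex envelope $\DeltaBE^{\mathrm{vex}}(x)$. To see this, note that the function $x\mapsto\inf\{\cdots\}$ is convex, since concatenating decompositions gives decompositions of convex combinations. It also lies below $\DeltaBE$, via the trivial decomposition. Conversely, any convex $f\le\DeltaBE$ satisfies $f(x)\le\sum\lambda_if(q_i)\le\sum\lambda_i\DeltaBE(q_i)$. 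By Carathéodory in one dimension, two points suffice, so a binary $V$ achieves the infimum whenever it is attained.

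It remains to justify writing $\max$ and to handle attainment. $\DeltaBE$ is continuous on the compact interval $[0,1]$, because von Neumann entropy is continuous and the output states depend affinely on $x$. Its convex envelope is then continuous on $[0,1]$: it is convex on an interval, and at the endpoints it equals $\DeltaBE$ with continuity there following from the continuity of $\DeltaBE$. The two-point infimum is over a compact set and is attained. Hence $\DeltaBE-\DeltaBE^{\mathrm{vex}}$ is continuous and the maximum is attained.

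I expect this endpoint continuity to be the only delicate point. I would handle it by noting that the two-point formula expresses $\DeltaBE^{\mathrm{vex}}(x)$ as the minimum of a jointly continuous function over a compact set that varies continuously with $x$, and then invoking Berge's maximum theorem.

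For the final claim, $\DeltaBE-\DeltaBE^{\mathrm{vex}}\ge0$ everywhere, and it vanishes identically iff $\DeltaBE=\DeltaBE^{\mathrm{vex}}$, which holds iff $\DeltaBE$ is convex. So $C^{(1)}=0$ iff $\DeltaBE$ is convex on $[0,1]$.
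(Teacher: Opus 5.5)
Your proposal is correct and follows the same route as the paper. You rewrite the objective as $\DeltaBE(\bar x)-\sum_vP_V(v)\DeltaBE(x_v)$, identify the infimum over finitely supported measures with mean $\bar x$ as the lower convex envelope (attained by a two-point measure), and maximize over $\bar x$. Beyond the paper, you justify why that infimum is the largest convex minorant and why the outer maximum is attained. Your Berge argument is the right way to get continuity of $\DeltaBE^{\mathrm{vex}}$ at the endpoints, since convexity alone does not give it.
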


\begin{proof}
For a joint distribution $P_{VX}$ with conditionals
$P_{X|V=v}=\Bern(x_v)$ and marginal $P_X=\Bern(\bar x)$,
$\bar x=\sum_vP_V(v)x_v$, a direct computation gives
$I(V;B)-I(V;E)=\DeltaBE(\bar x)-\sum_vP_V(v)\DeltaBE(x_v)$.  For fixed
$\bar x$, the infimum of $\sum_vP_V(v)\DeltaBE(x_v)$ over all finitely
supported probability measures on $[0,1]$ with mean $\bar x$ is the value at
$\bar x$ of the lower convex envelope, and it is attained by a two-point
measure because $\DeltaBE$ is continuous on a compact interval.  Maximizing
over $\bar x$ gives \eqref{eq:binary_single_use}.
\end{proof}

\section{The covariant preprocessing bound}\Label{sec:covariant_wiretap}

We now assume that $(W_B,W_E)$ is a group-orbit channel
\eqref{eq:group_orbit_channel} with input alphabet $G$, and we use the
notation of Section~\ref{subsec:convolution}.  The construction is the one
sketched in the introduction and drawn in Fig.~\ref{fig:convolution_bound}:
an outer variable $V\sim P_3$ and an independent noise variable
$Z\sim P_1$ are multiplied in the group, and $X=VZ$ is sent.

\begin{figure}[t]
\centering
\begin{tikzpicture}[
    node distance=1.5cm,
    block/.style={draw, rounded corners, align=center, minimum width=2.9cm, minimum height=.9cm},
    arr/.style={-Latex, thick}
]
\node[block] (v) {$V\sim P_3$};
\node[block, below=1.0cm of v] (y) {$Z\sim P_1$};
\node[block, right=2.4cm of $(v)!0.5!(y)$] (x) {$X=VZ$\\$X\sim P_3*P_1$};
\node[block, right=1.8cm of x] (out) {outputs\\$U_{B,X}\rho_B U_{B,X}^\dagger$\\$U_{E,X}\rho_E U_{E,X}^\dagger$};
\draw[arr] (v.east) -- (x.north west);
\draw[arr] (y.east) -- (x.south west);
\draw[arr] (x) -- (out);
\end{tikzpicture}
\caption{The covariant preprocessing.  Conditioned on $V=v$ the input is the translate $L_vP_1$, whose entropies do not depend on $v$; the input marginal is the convolution $P_3*P_1$.}
\Label{fig:convolution_bound}
\end{figure}

\begin{theorem}[Covariant preprocessing bound]\Label{thm:main}
Let $(W_B,W_E)$ be a group-orbit channel with input alphabet $G$, and let
$P_1,P_3\in\calP(G)$.  For independent $V\sim P_3$ and $Z\sim P_1$ and the
input $X=VZ$, that is, for the joint distribution
\begin{equation}
P_{VX}(v,x)=P_3(v)\,P_1(v^{-1}x),
\Label{eq:PVX}
\end{equation}
one has the exact identity
\begin{equation}
I(V;B)-I(V;E)=\DeltaBE(P_3*P_1)-\DeltaBE(P_1).
\Label{eq:main_identity}
\end{equation}
Consequently
\begin{equation}
C^{(1)}(W_B,W_E)\ \ge\ \max_{P_1,P_3\in\calP(G)}\bigl[\DeltaBE(P_3*P_1)-\DeltaBE(P_1)\bigr].
\Label{eq:main_bound}
\end{equation}
The choice $P_3=\Pu$ gives
\begin{equation}
C^{(1)}(W_B,W_E)\ \ge\ \DeltaBE(\Pu)-\min_{P\in\calP(G)}\DeltaBE(P),
\Label{eq:uniform_bound}
\end{equation}
and \eqref{eq:uniform_bound} holds with equality whenever
$\Pu\in\argmax_{P\in\calP(G)}\DeltaBE(P)$.
\end{theorem}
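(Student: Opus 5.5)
The plan is to compute both mutual informations directly from the chain rule for cq states, and then invoke Lemma~\ref{lem:translation}. For the joint distribution \eqref{eq:PVX}, the state on $VB$ is $\sum_v P_3(v)\ket{v}\!\bra{v}\otimes\rhoBbar(L_vP_1)$, because conditioned on $V=v$ the input is distributed as $L_vP_1$. Hence
\[
I(V;B)=H\bigl(\textstyle\sum_vP_3(v)\rhoBbar(L_vP_1)\bigr)-\sum_vP_3(v)H(\rhoBbar(L_vP_1)).
\]
By linearity of $P\mapsto\rhoBbar(P)$ and the mixture form \eqref{eq:convolution} of the convolution, the first term equals $H(\rhoBbar(P_3*P_1))$. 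By Lemma~\ref{lem:translation}, $\rhoBbar(L_vP_1)$ is a unitary conjugate of $\rhoBbar(P_1)$, so each conditional entropy equals $H(\rhoBbar(P_1))$. The same computation for Eve gives the analogous expression, and subtracting yields \eqref{eq:main_identity}. Because the translation lemma gives unitary equivalence of the states themselves, and not merely equal entropy differences, each conditional entropy is constant separately for Bob and for Eve. This is what makes the identity exact.

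The bound \eqref{eq:main_bound} then follows immediately, since \eqref{eq:PVX} is one admissible $P_{VX}$ in \eqref{eq:Cn_def} with $n=1$. The maximum exists because $\DeltaBE$ is continuous on the compact simplex and convolution is continuous. For \eqref{eq:uniform_bound}, set $P_3=\Pu$ and use the absorption property $\Pu*P_1=\Pu$. The right-hand side then becomes $\DeltaBE(\Pu)-\DeltaBE(P_1)$, and we optimize over $P_1$.

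For the equality claim we need a converse. For an arbitrary $P_{VX}$ with conditionals $P_{X|V=v}$ and marginal $P_X$, the same direct computation as in Lemma~\ref{lem:binary_single_use} gives
\[
I(V;B)-I(V;E)=\DeltaBE(P_X)-\sum_vP_V(v)\DeltaBE(P_{X|V=v}).
\]
This is at most $\max_P\DeltaBE(P)-\min_P\DeltaBE(P)$. When $\Pu$ is a maximizer, this upper bound equals $\DeltaBE(\Pu)-\min_P\DeltaBE(P)$ and matches \eqref{eq:uniform_bound}. The auxiliary alphabet may be arbitrary, or countable; the bound holds termwise for any finite support, and the supremum is handled by the same inequality.

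There is no serious obstacle here. The only points needing care are, first, that projective phases cancel, which is already covered by Lemma~\ref{lem:translation}, and second, checking that the conditional state given $V=v$ is exactly $\rhoBbar(L_vP_1)$. The latter uses $P_1(v^{-1}x)=(L_vP_1)(x)$.
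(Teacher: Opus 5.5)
Your proposal is correct and follows the paper's proof step for step. In both, the conditional states $\rhoBbar(L_vP_1)$ and $\rhoEbar(L_vP_1)$ have constant entropy by Lemma~\ref{lem:translation}; the absorption $\Pu*P_1=\Pu$ gives the uniform bound; and the converse $\DeltaBE(P_X)-\sum_vP_V(v)\DeltaBE(P_{X|V=v})\le\DeltaBE(\Pu)-\min_P\DeltaBE(P)$ gives equality when $\Pu$ maximizes $\DeltaBE$.
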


\begin{proof}
Under \eqref{eq:PVX} the marginal of $X$ is $P_3*P_1$ by
\eqref{eq:convolution}, and the conditional distribution of $X$ given
$V=v$ is the translate $L_vP_1$.  Bob's conditional state given $V=v$ is
therefore $\rhoBbar(L_vP_1)$, whose entropy equals $H(\rhoBbar(P_1))$ by
Lemma~\ref{lem:translation}, and his average state is $\rhoBbar(P_3*P_1)$.
Hence
$I(V;B)=H(\rhoBbar(P_3*P_1))-\sum_vP_3(v)H(\rhoBbar(L_vP_1))
=H(\rhoBbar(P_3*P_1))-H(\rhoBbar(P_1))$, and the same computation for Eve
gives $I(V;E)=H(\rhoEbar(P_3*P_1))-H(\rhoEbar(P_1))$.  Subtracting yields
\eqref{eq:main_identity}, and \eqref{eq:main_bound} follows from the
definition of $C^{(1)}$ as a supremum.  Since $\Pu*P_1=\Pu$ for every
$P_1$, the choice $P_3=\Pu$ followed by a minimization over $P_1$ gives
\eqref{eq:uniform_bound}.

For the equality statement, let $P_{VX}$ be arbitrary with conditionals
$P_{X|V=v}$ and marginal $P_X$.  Then
\[
I(V;B)-I(V;E)=\DeltaBE(P_X)-\sum_vP_V(v)\,\DeltaBE(P_{X|V=v})
\le\max_P\DeltaBE(P)-\min_P\DeltaBE(P),
\]
and if $\Pu$ maximizes $\DeltaBE$, this bound coincides with the right-hand
side of \eqref{eq:uniform_bound}.  Combined with \eqref{eq:uniform_bound}
this gives equality.
\end{proof}

On the binary parameter line studied in Section~\ref{sec:TBR}, the
uniform input does not maximize $\DeltaBE$. There we use the binary
convex-envelope formula to evaluate the single-use private information.

Because the $n$-fold channel is again a group-orbit channel, for the product
group $G^n$ with product seed states and product representations, the
theorem applies at every blocklength.  Write $\DeltaBE_n$ for the entropy
difference functional \eqref{eq:Delta_def} of the $n$-fold channel, defined
on $\calP(G^n)$.

\begin{corollary}[Finite-block covariant preprocessing]\Label{cor:finite_block_covariant}
For every $n\ge1$ and all $P_1,P_3\in\calP(G^n)$,
\begin{equation}
\frac1n\,C^{(n)}(W_B,W_E)\ \ge\ \frac1n\bigl[\DeltaBE_n(P_3*P_1)-\DeltaBE_n(P_1)\bigr],
\Label{eq:finite_block_covariant}
\end{equation}
so that every pair $(P_1,P_3)$ gives an explicit achievable bound at
blocklength $n$.  We write
\begin{equation}
R_n^{\pair}:=\frac1n\max_{P_1,P_3\in\calP(G^n)}\bigl[\DeltaBE_n(P_3*P_1)-\DeltaBE_n(P_1)\bigr]
\Label{eq:Rpair}
\end{equation}
for the best value within the family.
\end{corollary}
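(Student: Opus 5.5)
The plan is to apply Theorem~\ref{thm:main} verbatim to the $n$-fold channel, viewed as a group-orbit channel on the product group $G^n$. The first step is to check that $(W^{(n)}_B,W^{(n)}_E)$ is indeed of the form \eqref{eq:group_orbit_channel}. Take the input alphabet $G^n$ with componentwise multiplication, acting on itself by left translation. Take the seeds to be $\rho_B^{\otimes n}$ and $\rho_E^{\otimes n}$. Take the representations to be $U^{(n)}_{R,(g_1,\dots,g_n)}:=U_{R,g_1}\otimes\cdots\otimes U_{R,g_n}$ for $R\in\{B,E\}$.

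This is a unitary representation of $G^n$, projective if $U_R$ is. Any cocycle phases multiply across tensor factors and still cancel under conjugation, which is all that Lemma~\ref{lem:translation} and Theorem~\ref{thm:main} use. We then have
\[
U^{(n)}_{R,g^n}\rho_R^{\otimes n}\bigl(U^{(n)}_{R,g^n}\bigr)^\dagger=\bigotimes_{i}U_{R,g_i}\rho_RU_{R,g_i}^\dagger=W^R_{g_1}\otimes\cdots\otimes W^R_{g_n},
\]
which is exactly the $n$-fold channel output in the definition of $C^{(n)}$. The functional $\DeltaBE$ of this channel on $\calP(G^n)$ is, by definition, $\DeltaBE_n$.

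Next, for any $P_1,P_3\in\calP(G^n)$, I would form the joint distribution \eqref{eq:PVX}, with $V$ a $G^n$-valued auxiliary variable and $X^n=VZ$ computed in $G^n$. Theorem~\ref{thm:main}, specifically the identity \eqref{eq:main_identity} applied to this group-orbit channel, gives
\[
I(V;B^n)-I(V;E^n)=\DeltaBE_n(P_3*P_1)-\DeltaBE_n(P_1).
\]
Here the convolution is taken in $G^n$. The joint distribution $P_{VX^n}$ is admissible in the supremum \eqref{eq:Cn_def}, so $C^{(n)}$ is at least this value. Dividing by $n$ yields \eqref{eq:finite_block_covariant}.

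Finally, the maximum in \eqref{eq:Rpair} exists. $\calP(G^n)\times\calP(G^n)$ is compact, and the objective is continuous: convolution is bilinear, the averaged states depend linearly on $P$, and von Neumann entropy is continuous in finite dimension. So $R_n^{\pair}\le C^{(n)}/n$ is well defined.

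There is no genuine obstacle here. The only point needing care is the first step: confirming that the product representation reproduces the tensor-product channel outputs, and that projective phases remain harmless. Both follow directly from the definitions.
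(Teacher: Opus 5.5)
Your proposal is correct and takes the paper's route: the paper observes that the $n$-fold channel is again a group-orbit channel on $G^n$, with product seeds and product representations, and applies Theorem~\ref{thm:main} at each blocklength. Your added checks, that projective phases still cancel and that the maximum defining $R_n^{\pair}$ exists by compactness and continuity, are sound details the paper leaves implicit.
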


The conditional input distributions in this family are translates of a
common noise distribution; the bound does not require this restriction
to be optimal among all auxiliary-variable constructions. Both $P_1$ and
$P_3$ can contain correlations across uses. Uniform $P_3$ gives the
product-channel version of \eqref{eq:uniform_bound}. For $G^n=\bbZ_2^n$,
uniform $P_3$ on $\{0^n,1^n\}$ and product noise give noisy repetition.
A fixed preprocessing block is used as a super-symbol in a longer channel
code, as described next.

\section{Operational realization of the covariant bound}\Label{sec:protocol}

Treat a fixed preprocessing block as one super-symbol and use the
induced channel $N$ times. Throughout this section, $G$ denotes the
super-symbol group, which is $G^n$ when the preprocessing block has
length $n$, and $N$ is the coding blocklength.

\subsection{Induced channel and information difference}
Fix $P_1,P_3\in\calP(G)$ and define the sender-randomized channels
\begin{equation}
\widetilde W^B_v:=\sum_{z\in G}P_1(z)\,W^B_{vz}=\rhoBbar(L_vP_1),\qquad
\widetilde W^E_v:=\sum_{z\in G}P_1(z)\,W^E_{vz}=\rhoEbar(L_vP_1).
\Label{eq:induced_channel}
\end{equation}
When $V\sim P_3$ is fed into these channels, the physical input has
distribution $P_3*P_1$, and Theorem~\ref{thm:main} says exactly that
\begin{equation}
I(V;B)_{\widetilde W^B}-I(V;E)_{\widetilde W^E}=\DeltaBE(P_3*P_1)-\DeltaBE(P_1).
\Label{eq:induced_difference}
\end{equation}
The task of the coding layer is to realize this difference as a rate of
secret communication over the induced channel.

\subsection{Codes, hashes and conventions}
The secret message $M$ is uniformly distributed on
$\{1,\dots,\sM_N\}$.  A \emph{balanced universal$_2$ family} is a finite
collection $\{F_s\}_{s\in\calS}$ of surjective maps
$F_s:\{1,\dots,\sK_N\}\to\{1,\dots,\sM_N\}$, $\sK_N=\sM_N\sL_N$, such that
every fiber $F_s^{-1}(m)$ has exactly $\sL_N$ elements and, for any two
distinct indices $m'\neq m''$, $\Pr_s[F_s(m')=F_s(m'')]\le 1/\sM_N$ when the
seed $s$ is drawn uniformly from $\calS$ \cite{CW,H13,WLH}.  Universality is a
property of the family, not of a single map; the seed is public and is
drawn independently of everything else.  We write $\bbE_F$ for the average
over the seed and
\begin{equation}
\rho_{\mathrm{sac},N}:=\frac1N\log\sL_N
\Label{eq:sacrifice_rate}
\end{equation}
for the sacrifice rate.

The channel code $\phi_N:\{1,\dots,\sK_N\}\to G^N$,
$\phi_N(m')=(\phi_{N,1}(m'),\dots,\phi_{N,N}(m'))$, is a code for Bob's
induced channel $\widetilde W^B$.  We take it to be a
\emph{constant-composition} code: all codewords have the same empirical
type $Q_N\in\calP(G)$, with $\supp Q_N\subseteq S:=\supp P_3$ and
$Q_N\to P_3$ as $N\to\infty$.  Constant-composition codes achieve every rate
below the mutual information of a finite-input memoryless cq channel; see
\cite{Holevo98,SW97,Winter99,Hayashi_universal} and \cite[Chapter~4]{Springer}.  Decoding
success is measured by the average error probability $\epsilon^B_N$ over a
uniformly distributed index $M'$.  Secrecy is measured by the trace
distance
\begin{equation}
\delta^E_N:=\bigl\|\rho_{ME^N}-\rho_M\otimes\rho_{E^N}\bigr\|_1,
\Label{eq:leakage_criterion}
\end{equation}
where $\rho_{ME^N}$ is the joint state of the uniform message and Eve's
$N$ output systems.

\subsection{The protocol}
\begin{enumerate}
\item To send $m$, Alice draws $M'$ uniformly from the fiber
$F_s^{-1}(m)$.
\item She encodes it as $\phi_N(M')\in G^N$.
\item Independently for each position $j$ she draws $Z_j\sim P_1$ and sends
$X_j=\phi_{N,j}(M')\,Z_j$ through the physical channel.
\item Bob decodes $\widehat M'$ with the decoder of the code $\phi_N$ for
$\widetilde W^B$ and outputs $F_s(\widehat M')$.
\end{enumerate}
The code and the hash seed are public.  Only the local randomization
$Z^N$ and the draw inside the hash fiber are private
(Fig.~\ref{fig:protocol}).  Because the fibers are balanced, $M'$ is
uniformly distributed on $\{1,\dots,\sK_N\}$ for every seed $s$; this is
what allows us later to select a seed on the basis of its leakage without
disturbing the error analysis of the code.

\begin{figure}[t]
\centering
\resizebox{0.96\linewidth}{!}{%
\begin{tikzpicture}[
    node distance=1.15cm,
    block/.style={draw, rounded corners, align=center, minimum width=2.45cm, minimum height=.85cm},
    arr/.style={-Latex, thick}
]
\node[block] (msg) {message\\$M$};
\node[block, right=of msg] (fiber) {uniform draw\\$M'\in F_s^{-1}(M)$};
\node[block, right=of fiber] (code) {channel code\\$\phi_N(M')\in G^N$};
\node[block, right=of code] (noise) {local randomization\\$Z_j\sim P_1$};
\node[block, right=of noise] (input) {channel input\\$X_j=\phi_{N,j}(M')Z_j$};
\draw[arr] (msg) -- (fiber);
\draw[arr] (fiber) -- (code);
\draw[arr] (code) -- (noise);
\draw[arr] (noise) -- (input);
\end{tikzpicture}%
}
\caption{Encoding for the covariant bound.  The hash seed determines the fiber from which the randomized index is drawn, and $P_1$ supplies the sender-side group randomization.}
\Label{fig:protocol}
\end{figure}

\subsection{Reliability}
The constant-composition coding theorem for the memoryless channel
$\widetilde W^B$ allows us to choose the sequence $\{\phi_N\}$, with types
$Q_N\to P_3$ supported in $\supp P_3$, so that $\epsilon^B_N\to0$ whenever
the total code rate satisfies
\begin{equation}
\limsup_{N\to\infty}\frac1N\log\sK_N<I(V;B)_{\widetilde W^B}.
\Label{eq:reliability_rate}
\end{equation}

\subsection{Secrecy for an arbitrary outer distribution}\Label{subsec:secrecy_general}
For $\alpha>1$, the sandwiched R\'enyi divergence is
\cite{Wilde_2014,M_ller_Lennert_2013}
\begin{equation}
D_\alpha(\rho\|\sigma):=\frac1{\alpha-1}\log\Tr\Bigl\{\bigl(\sigma^{\frac{1-\alpha}{2\alpha}}\rho\,\sigma^{\frac{1-\alpha}{2\alpha}}\bigr)^{\alpha}\Bigr\},
\end{equation}
where negative powers of $\sigma$ are taken on its support, and the
divergence is $+\infty$ unless $\supp\rho\subseteq\supp\sigma$.  It is
monotone in $\alpha$ and converges to the relative entropy $D(\rho\|\sigma)$
as $\alpha\downarrow1$.  The associated R\'enyi mutual information of a state
$\rho_{AE}$ is \cite{HT}
\begin{equation}
I_\alpha(A;E)[\rho_{AE}]:=\min_{\sigma_E}D_\alpha(\rho_{AE}\|\rho_A\otimes\sigma_E),
\end{equation}
and for a cq state $\sum_aP(a)\ket a\!\bra a\otimes\rho_{E|a}$ it reduces to
\begin{equation}
I_\alpha(A;E)=\min_{\sigma_E}\frac1{\alpha-1}\log\sum_aP(a)\Tr\Bigl\{\bigl(\sigma_E^{\frac{1-\alpha}{2\alpha}}\rho_{E|a}\sigma_E^{\frac{1-\alpha}{2\alpha}}\bigr)^{\alpha}\Bigr\}.
\Label{eq:renyi_cq}
\end{equation}
Let $\rho_{M'E^N}$ be the state of the uniform index $M'$ and Eve's outputs,
\begin{equation}
\rho_{M'E^N}:=\sum_{m'=1}^{\sK_N}\frac1{\sK_N}\ket{m'}\!\bra{m'}\otimes
\bigotimes_{j=1}^N\widetilde W^E_{\phi_{N,j}(m')},
\end{equation}
and write $I_\alpha(M';E^N)$ for its R\'enyi mutual information.  The
balanced-hashing bound of \cite[Theorem~1]{WLH}, expressed in nats, gives
\begin{equation}
\bbE_F\,\delta^E_N\ \le\ \min_{0\le t\le1}2^{\frac{1-t}{1+t}}
\exp\Bigl[-\frac{t}{1+t}\bigl(N\rho_{\mathrm{sac},N}-I_{1+t}(M';E^N)\bigr)\Bigr].
\Label{EE1}
\end{equation}
A fixed positive secrecy exponent follows from the common codeword
type. For $S=\supp P_3$, define
\begin{equation}
\omega_v:=\widetilde W^E_v,\qquad
\sigma_E:=\sum_{v\in S}P_3(v)\,\omega_v,\qquad
J_\alpha(Q):=\sum_{v\in S}Q(v)\,D_\alpha(\omega_v\|\sigma_E)\quad(Q\in\calP(S)).
\Label{eq:J_def}
\end{equation}
Every $\omega_v$ with $v\in S$ is supported in $\supp\sigma_E$, so all the
divergences in \eqref{eq:J_def} are finite.

\begin{lemma}[Type bound on the R\'enyi leakage]\Label{lem:type_bound}
If every codeword of $\phi_N$ has type $Q_N$ with $\supp Q_N\subseteq S$,
then for $1<\alpha\le2$
\begin{equation}
I_\alpha(M';E^N)\le N\,J_\alpha(Q_N).
\Label{eq:type_renyi}
\end{equation}
\end{lemma}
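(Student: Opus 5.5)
The plan is to bound the minimum in \eqref{eq:renyi_cq} by plugging in one explicit comparison state, the product state $\sigma_E^{\otimes N}$ with $\sigma_E$ from \eqref{eq:J_def}. With the uniform index $M'$ this gives
\[
I_\alpha(M';E^N)\le\frac1{\alpha-1}\log\sum_{m'=1}^{\sK_N}\frac1{\sK_N}\,
\exp\Bigl[(\alpha-1)\,D_\alpha\Bigl(\textstyle\bigotimes_{j=1}^N\omega_{\phi_{N,j}(m')}\Big\|\sigma_E^{\otimes N}\Bigr)\Bigr],
\]
where $\omega_v=\widetilde W^E_v$. I will rewrite the summand $\Tr\{(\cdots)^\alpha\}$ as the exponential of the divergence. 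This is legitimate because each codeword symbol lies in $\supp Q_N\subseteq S$. Hence every $\omega_{\phi_{N,j}(m')}$ is supported in $\supp\sigma_E$, the product state is supported in $\supp\sigma_E^{\otimes N}$, and each divergence is finite.

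Next I will use additivity of the sandwiched R\'enyi divergence on tensor products. The operator $(\sigma_E^{\otimes N})^{\frac{1-\alpha}{2\alpha}}$ equals $(\sigma_E^{\frac{1-\alpha}{2\alpha}})^{\otimes N}$, with negative powers taken on the support, which factorizes. The sandwiched operator is therefore $\bigotimes_j\sigma_E^{\frac{1-\alpha}{2\alpha}}\omega_{\phi_{N,j}(m')}\sigma_E^{\frac{1-\alpha}{2\alpha}}$. Its $\alpha$-th power factorizes, and its trace is the product of the traces. Taking the logarithm gives
\[
D_\alpha\Bigl(\textstyle\bigotimes_j\omega_{\phi_{N,j}(m')}\Big\|\sigma_E^{\otimes N}\Bigr)=\sum_{j=1}^N D_\alpha(\omega_{\phi_{N,j}(m')}\|\sigma_E).
\]

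Then I will use constant composition. Grouping the positions $j$ according to the symbol $v=\phi_{N,j}(m')$, the symbol $v$ appears exactly $NQ_N(v)$ times in every codeword. The sum therefore equals $N\sum_{v\in S}Q_N(v)D_\alpha(\omega_v\|\sigma_E)=N J_\alpha(Q_N)$, independently of $m'$. The average over $m'$ is then an average of identical terms, and the outer $\frac1{\alpha-1}\log\exp[(\alpha-1)\cdot]$ collapses, which yields \eqref{eq:type_renyi}.

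I do not expect a genuine obstacle here. The only care point is bookkeeping of supports and of the negative powers in the factorization; finiteness of the divergences is guaranteed by $\supp Q_N\subseteq S$. The restriction $1<\alpha\le2$ is not needed by this argument, which works for all $\alpha>1$; it only matches the range $\alpha=1+t$ with $t\in[0,1]$ used in \eqref{EE1}.
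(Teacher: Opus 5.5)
Your proposal is correct and follows the paper's proof: you plug in the product comparison state $\sigma_E^{\otimes N}$, factorize the trace across positions, and use the common type $Q_N$ to make the exponent $NJ_\alpha(Q_N)$ for every codeword, so the average over $m'$ is trivial. Your support check via $\supp Q_N\subseteq S$ and your remark that the argument works for all $\alpha>1$ are both accurate.
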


\begin{proof}
Use the product state $\sigma_E^{\otimes N}$ as the comparison state in
\eqref{eq:renyi_cq}.  Since the conditional output of codeword $m'$ is the
product $\bigotimes_j\omega_{\phi_{N,j}(m')}$ and the comparison state is a
product, the trace factorizes over the $N$ positions:
\begin{align}
I_\alpha(M';E^N)
&\le\frac1{\alpha-1}\log\Biggl\{\frac1{\sK_N}\sum_{m'=1}^{\sK_N}
\prod_{j=1}^N\Tr\Bigl\{\bigl(\sigma_E^{\frac{1-\alpha}{2\alpha}}\omega_{\phi_{N,j}(m')}\sigma_E^{\frac{1-\alpha}{2\alpha}}\bigr)^{\alpha}\Bigr\}\Biggr\}\notag\\
&=\frac1{\alpha-1}\log\Biggl\{\frac1{\sK_N}\sum_{m'=1}^{\sK_N}
\exp\Bigl[(\alpha-1)\sum_{j=1}^N D_\alpha\bigl(\omega_{\phi_{N,j}(m')}\big\|\sigma_E\bigr)\Bigr]\Biggr\}.
\end{align}
For a codeword of type $Q_N$ the inner sum equals
$N\sum_vQ_N(v)D_\alpha(\omega_v\|\sigma_E)=NJ_\alpha(Q_N)$, the same for
every $m'$, and the average over $m'$ is trivial.
\end{proof}

\begin{lemma}[Continuity at $\alpha=1$]\Label{lem:J_continuity}
$\lim_{t\downarrow0}J_{1+t}(P_3)=\sum_{v\in S}P_3(v)D(\omega_v\|\sigma_E)=I(V;E)_{\widetilde W^E}$.
Moreover, for each fixed $t\in(0,1]$ the map $Q\mapsto J_{1+t}(Q)$ is
continuous on $\calP(S)$.
\end{lemma}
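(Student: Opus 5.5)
The plan is to reduce both claims to standard finite-dimensional facts about sandwiched R\'enyi divergences with a fixed second argument $\sigma_E$, using that the sum defining $J_\alpha$ runs over the finite set $S$. Two preliminary facts are available. First, every $\omega_v$ with $v\in S$ satisfies $\supp\omega_v\subseteq\supp\sigma_E$, because $\sigma_E\ge P_3(v)\omega_v$ with $P_3(v)>0$. Second, we may restrict all operators to $\supp\sigma_E$, where $\sigma_E$ is invertible. Each term $D_{1+t}(\omega_v\|\sigma_E)$ is then a finite real number.

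For the second claim, fix $t\in(0,1]$. Then $Q\mapsto J_{1+t}(Q)=\sum_{v\in S}Q(v)\,D_{1+t}(\omega_v\|\sigma_E)$ is a linear function of $Q$ with finite real coefficients. It is therefore continuous on the simplex $\calP(S)$; nothing beyond the finiteness noted above is needed.

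For the limit, it suffices to show $D_{1+t}(\omega_v\|\sigma_E)\to D(\omega_v\|\sigma_E)$ as $t\downarrow0$ for each fixed $v\in S$, since the sum is finite.
\begin{itemize}
\item The paper has already recorded, as a known property of the sandwiched divergence, that it converges to the relative entropy as $\alpha\downarrow1$. I would simply invoke this, citing \cite{M_ller_Lennert_2013,Wilde_2014}.
\item To make it self-contained, set $f(\alpha):=\log\Tr\{(\sigma_E^{(1-\alpha)/2\alpha}\omega_v\sigma_E^{(1-\alpha)/2\alpha})^\alpha\}$. This function is differentiable near $\alpha=1$ because $\sigma_E$ is invertible on the relevant support, and $f(1)=\log\Tr\omega_v=0$.
\item Hence $D_\alpha=(f(\alpha)-f(1))/(\alpha-1)\to f'(1)$. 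A direct derivative computation, using $\frac{d}{d\alpha}X^\alpha$ and the derivative of $\sigma^{(1-\alpha)/2\alpha}$ at $\alpha=1$, gives $f'(1)=\Tr\omega_v\log\omega_v-\Tr\omega_v\log\sigma_E=D(\omega_v\|\sigma_E)$.
\end{itemize}
This derivative computation is the only step with real content. If it is cited rather than redone, it is not an obstacle.

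Finally, identify the limit with the Holevo quantity. Since $\sigma_E=\sum_vP_3(v)\omega_v$ is exactly Eve's average output under $V\sim P_3$ through $\widetilde W^E$, the standard identity gives
\[
\sum_{v}P_3(v)D(\omega_v\|\sigma_E)=-\sum_vP_3(v)H(\omega_v)-\Tr\sigma_E\log\sigma_E=H(\sigma_E)-\sum_vP_3(v)H(\omega_v)=I(V;E)_{\widetilde W^E}.
\]
The cross term uses linearity of $\Tr(\cdot\,\log\sigma_E)$ in the first argument.
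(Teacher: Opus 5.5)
Your proposal is correct and follows the paper's proof essentially step for step. The support inclusion makes each $D_{1+t}(\omega_v\|\sigma_E)$ finite, the convergence $D_{1+t}\to D$ is cited from \cite{M_ller_Lennert_2013} and passed through the finite sum over $S$, the limit is identified as the Holevo quantity of the ensemble with average $\sigma_E$, and continuity in $Q$ follows from linearity with finite coefficients. Your optional derivative computation of $f'(1)$ is a valid self-contained replacement for the citation. When $\omega_v$ is singular, differentiability is cleanest to justify by working with $\omega_v^{1/2}\sigma_E^{(1-\alpha)/\alpha}\omega_v^{1/2}$ restricted to $\supp\omega_v$, where it is invertible.
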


\begin{proof}
In finite dimension and under the support condition,
$D_{1+t}(\omega_v\|\sigma_E)\to D(\omega_v\|\sigma_E)$ as $t\downarrow0$
\cite{M_ller_Lennert_2013}; the sum over the finite set $S$ then converges,
and $\sum_vP_3(v)D(\omega_v\|\sigma_E)$ is the mutual information of the
cq state with conditionals $\omega_v$ and average $\sigma_E$.  The second
statement holds because $J_{1+t}$ is a finite linear combination of
finite numbers.
\end{proof}

\begin{theorem}[Achievability of the covariant bound]\Label{thm:achievability}
Let $(P_1,P_3)$ be any pair in $\calP(G)\times\calP(G)$ and let
$\{\phi_N\}$ be constant-composition codes for $\widetilde W^B$ with types
$Q_N\to P_3$, $\supp Q_N\subseteq\supp P_3$.  If
\begin{equation}
\liminf_{N\to\infty}\rho_{\mathrm{sac},N}>I(V;E)_{\widetilde W^E},
\Label{eq:sacrifice_condition}
\end{equation}
then $\bbE_F\delta^E_N\to0$ exponentially in $N$. In particular, with
codes provided by the coding theorem, every nonnegative rate $R<\DeltaBE(P_3*P_1)-\DeltaBE(P_1)$ is achievable with
vanishing average error and trace-distance leakage. For each $N$, a
deterministic public hash seed $s_N$ can be chosen.
\end{theorem}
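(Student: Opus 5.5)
The plan is to combine the balanced-hashing bound \eqref{EE1} with the type bound of Lemma~\ref{lem:type_bound}, and to use Lemma~\ref{lem:J_continuity} to find one fixed $t$ at which the exponent is strictly positive.

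\emph{Exponential leakage.} Set $\rho^*:=\liminf_N\rho_{\mathrm{sac},N}$ and $\eta:=\rho^*-I(V;E)_{\widetilde W^E}>0$ (if $\rho^*=\infty$, replace $\rho^*$ by any finite number exceeding $I(V;E)$). By the first part of Lemma~\ref{lem:J_continuity}, I will fix $t_0\in(0,1]$ with $J_{1+t_0}(P_3)<I(V;E)+\eta/4$. By the second part, continuity of $Q\mapsto J_{1+t_0}(Q)$ on $\calP(S)$ together with $Q_N\to P_3$ gives, for all large $N$, both $J_{1+t_0}(Q_N)<I(V;E)+\eta/2$ and $\rho_{\mathrm{sac},N}>\rho^*-\eta/4$. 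Choosing $t=t_0$ in \eqref{EE1} and applying Lemma~\ref{lem:type_bound} with $\alpha=1+t_0\in(1,2]$ then yields
\[
\bbE_F\,\delta^E_N\le 2^{\frac{1-t_0}{1+t_0}}\exp\Bigl[-\tfrac{t_0}{1+t_0}N\bigl(\rho_{\mathrm{sac},N}-J_{1+t_0}(Q_N)\bigr)\Bigr]\le 2\exp\Bigl[-\tfrac{t_0}{1+t_0}\,\tfrac{\eta}{4}\,N\Bigr].
\]
This is exponential decay. The only real point here is that a single fixed $t_0$ works, so no uniformity in $t$ is needed.

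\emph{Rates.} Given $0\le R<\DeltaBE(P_3*P_1)-\DeltaBE(P_1)$, I will use \eqref{eq:induced_difference} to pick $\gamma>0$ with $R+I(V;E)+2\gamma<I(V;B)$. I then set $\sM_N=\lceil e^{NR}\rceil$ and $\sL_N=\lceil e^{N(I(V;E)+\gamma)}\rceil$. With these choices $\frac1N\log\sK_N\to R+I(V;E)+\gamma<I(V;B)$, so \eqref{eq:reliability_rate} holds, and $\liminf\rho_{\mathrm{sac},N}=I(V;E)+\gamma$, so \eqref{eq:sacrifice_condition} holds. Balanced universal$_2$ families with these parameters exist, for example by the constructions cited from \cite{WLH,H13}. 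A minor integrality issue arises because $\sK_N=\sM_N\sL_N$ must be matched to the code size; I will handle it by taking the code size equal to this product, which is allowed since the coding theorem permits any size below the capacity rate. For $R=0$ the statement is trivial.

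\emph{Deterministic seed.} The error probability $\epsilon^B_N$ does not depend on the seed $s$, because $M'$ is uniform for every $s$ (the fibers are balanced) and Bob's output is $F_s(\widehat M')$; its error is therefore at most $\Pr[\widehat M'\neq M']=\epsilon^B_N$. The leakage satisfies $\min_s\delta^E_N(s)\le\bbE_F\delta^E_N$, so choosing $s_N$ to minimize the leakage keeps both the error and the leakage vanishing. I expect no serious obstacle anywhere in the argument; the step that needs the most care is the $\liminf$ and continuity bookkeeping in the first paragraph.
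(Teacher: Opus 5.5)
Your proposal is correct and follows the paper's proof essentially step for step. You fix a single $t$ via Lemma~\ref{lem:J_continuity}, insert the type bound of Lemma~\ref{lem:type_bound} into \eqref{EE1}, take a sacrifice rate strictly between $I(V;E)_{\widetilde W^E}$ and $I(V;B)_{\widetilde W^B}-R$, and justify seed selection by the seed-independence of the distribution of $M'$; the remaining differences (your $\eta/4$ margins versus the paper's $3\eta$, ceilings versus floors for $\sM_N$) are cosmetic.
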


\begin{proof}
Choose $\eta>0$ with
$\liminf_N\rho_{\mathrm{sac},N}>I(V;E)_{\widetilde W^E}+3\eta$.  By
Lemma~\ref{lem:J_continuity} there is a fixed $t\in(0,1]$ with
$J_{1+t}(P_3)<I(V;E)_{\widetilde W^E}+\eta$, and by the continuity in $Q$
and $Q_N\to P_3$ there is $N_0$ such that for all $N\ge N_0$
\[
J_{1+t}(Q_N)<I(V;E)_{\widetilde W^E}+2\eta,
\qquad
\rho_{\mathrm{sac},N}>I(V;E)_{\widetilde W^E}+3\eta.
\]
Lemma~\ref{lem:type_bound} then gives
$N\rho_{\mathrm{sac},N}-I_{1+t}(M';E^N)\ge N\bigl(\rho_{\mathrm{sac},N}-J_{1+t}(Q_N)\bigr)>N\eta$,
and \eqref{EE1} with this fixed $t$ yields
\begin{equation}
\bbE_F\,\delta^E_N\le 2^{\frac{1-t}{1+t}}\exp\Bigl[-\frac{Nt\eta}{1+t}\Bigr]\longrightarrow0.
\end{equation}
For the operational statement, fix a nonnegative rate
$R<I(V;B)_{\widetilde W^B}-I(V;E)_{\widetilde W^E}$ and a sacrifice rate
$R_{\mathrm{sac}}$ with
$I(V;E)_{\widetilde W^E}<R_{\mathrm{sac}}<I(V;B)_{\widetilde W^B}-R$, and
take $\sM_N=\lfloor e^{NR}\rfloor$ and $\sL_N=\lceil e^{NR_{\mathrm{sac}}}\rceil$.
The total rate $R+R_{\mathrm{sac}}$ is below $I(V;B)_{\widetilde W^B}$, so
reliable constant-composition codes of this size exist by the coding
theorem, and \eqref{eq:sacrifice_condition} holds with the fixed margin
$R_{\mathrm{sac}}-I(V;E)_{\widetilde W^E}>0$, so the leakage vanishes
exponentially by the first part.  Finally, since the leakage bound is an
average over the seed, for each $N$ some seed $s_N$ has
$\delta^E_N(s_N)\le\bbE_F\delta^E_N$.  Fixing $s_N$ does not change the
distribution of the index $M'$, which is uniform for every seed because the
fibers are balanced, so the average decoding error of the code is unchanged
and reliability and secrecy hold together.  Taking the supremum over $R$
gives every rate below the difference \eqref{eq:induced_difference}.
\end{proof}

Optimizing over $(P_1,P_3)$ in Theorem~\ref{thm:achievability} recovers the
bound \eqref{eq:main_bound} operationally, and applying it to the product
group gives Corollary~\ref{cor:finite_block_covariant}.

\begin{remark}[Worst-case message]\Label{rem:worst_case}
Theorem~\ref{thm:achievability} is stated for a uniform message, which is
the criterion we use.  A statement for every message distribution follows
by expurgation: keep the messages whose conditional decoding error is at
most $\sqrt{\epsilon^B_N}$ and whose conditional Eve state is within trace
distance $\sqrt{\delta^E_N}$ of the average Eve state.  By Markov's
inequality at least a fraction $1-\sqrt{\epsilon^B_N}-\sqrt{\delta^E_N}$ of
the messages survive, the rate is unchanged asymptotically, and for every
distribution on the retained set the leakage relative to its own Eve
marginal is at most $2\sqrt{\delta^E_N}$.
For modular constructions of semantically secure classical--quantum and
quantum wiretap codes, see \cite{BCDFW22}.
\end{remark}

\subsection{Uniform outer randomization and an explicit exponent}\Label{subsec:uniform_exponent}

When the outer distribution is uniform, $P_3=\Pu$, the group structure
gives a single-letter bound on the R\'enyi leakage directly, without any
condition on the composition of the code, and the twirl over $G$ reduces the
minimization over comparison states to a minimization over a few
parameters.  Define
\begin{equation}
\rho_{E|P_1}:=\sum_{z\in G}P_1(z)\,U_{E,z}\rho_EU_{E,z}^\dagger=\rhoEbar(P_1),
\qquad
\rho_{AE}(P_1):=\sum_{x\in G}\frac1{|G|}\ket x\!\bra x\otimes U_{E,x}\,\rho_{E|P_1}\,U_{E,x}^\dagger.
\end{equation}
The state $\rho_{AE}(P_1)$ is the cq state of the induced channel
$\widetilde W^E$ with uniform input.

\begin{lemma}\Label{L2}
For every code $\phi_N$ and every $\alpha>1$,
\begin{equation}
I_\alpha(M';E^N)\le N\,I_\alpha(A;E)[\rho_{AE}(P_1)].
\end{equation}
\end{lemma}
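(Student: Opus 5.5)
The plan is to exhibit a single comparison state on Eve's one-use system that is good for \emph{every} input symbol simultaneously, and then tensor it up exactly as in the proof of Lemma~\ref{lem:type_bound}. The point is that, by Lemma~\ref{lem:translation}, the induced Eve outputs are all unitary conjugates of one state: $\widetilde W^E_v=\rhoEbar(L_vP_1)=U_{E,v}\,\rho_{E|P_1}\,U_{E,v}^\dagger$. If the comparison state $\sigma$ is $U_E$-invariant, then $\Tr\{(\sigma^{\frac{1-\alpha}{2\alpha}}\widetilde W^E_v\sigma^{\frac{1-\alpha}{2\alpha}})^\alpha\}$ is independent of $v$. The code composition then plays no role, which explains why no type condition is needed here.

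\emph{Step 1 (invariant optimizer).} Write $Q_\alpha(\rho\|\sigma)=\Tr\{(\sigma^{\frac{1-\alpha}{2\alpha}}\rho\,\sigma^{\frac{1-\alpha}{2\alpha}})^\alpha\}$. By \eqref{eq:renyi_cq}, the quantity $I_\alpha(A;E)[\rho_{AE}(P_1)]$ is $\frac1{\alpha-1}\log$ of the minimum over $\sigma$ of
\[
F(\sigma):=\frac1{|G|}\sum_{x\in G}Q_\alpha\bigl(U_{E,x}\rho_{E|P_1}U_{E,x}^\dagger\big\|\sigma\bigr).
\]
Replacing $\sigma$ by $U_{E,g}\sigma U_{E,g}^\dagger$ and relabelling $x\mapsto gx$ leaves $F$ unchanged. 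Projective phases cancel, and unitary invariance of $Q_\alpha$ is used here. For $\alpha>1$, $Q_\alpha$ is jointly convex, and in particular convex in $\sigma$; this is the known property of the sandwiched quantity for $\alpha\ge1/2$. Hence the twirl $\bar\sigma=\frac1{|G|}\sum_gU_{E,g}\sigma U_{E,g}^\dagger$ satisfies $F(\bar\sigma)\le F(\sigma)$. Taking $\sigma$ to be a minimizer then yields a $G$-invariant minimizer $\sigma^\star$.

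The minimum is attained by lower semicontinuity on the compact state space, with value $+\infty$ off the support condition. If attainment is delicate, I will instead work with an $\varepsilon$-optimal $\sigma$ and let $\varepsilon\to0$ at the end. For this invariant $\sigma^\star$, every term of $F$ equals $Q_\alpha(\rho_{E|P_1}\|\sigma^\star)$. Therefore
\[
Q_\alpha(\widetilde W^E_v\|\sigma^\star)=\exp\bigl[(\alpha-1)I_\alpha(A;E)[\rho_{AE}(P_1)]\bigr]\qquad\text{for every }v\in G.
\]

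\emph{Step 2 (tensorization).} Plug $\sigma^{\star\otimes N}$ into \eqref{eq:renyi_cq} for $\rho_{M'E^N}$. As in Lemma~\ref{lem:type_bound}, the trace factorizes over the $N$ positions. Each factor equals $\exp[(\alpha-1)I_\alpha(A;E)[\rho_{AE}(P_1)]]$ by Step 1, whatever $\phi_{N,j}(m')$ is. The average over $m'$ is therefore trivial, and taking $\frac1{\alpha-1}\log$ gives $I_\alpha(M';E^N)\le N\,I_\alpha(A;E)[\rho_{AE}(P_1)]$.

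The main obstacle is Step 1, specifically the justification that averaging the comparison state over the group does not increase the objective. I will rely on convexity of $Q_\alpha$ in its second argument for $\alpha>1$, which follows from the joint convexity results for sandwiched R\'enyi divergences. Everything else is the same factorization already used in Lemma~\ref{lem:type_bound}.
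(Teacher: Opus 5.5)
Your proof is correct, but it takes a different route from the paper's.

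The paper never symmetrizes the comparison state. It takes an arbitrary single-copy optimizer $\sigma^*_E$ and sets $\Sigma=(\sigma^*_E)^{\otimes N}$. It then uses the fact that $x\,\phi_N(m')$ is uniform on $G^N$ when $x$ is. This lets it rewrite $e^{(\alpha-1)N I_\alpha(A;E)[\rho_{AE}(P_1)]}$ as an average over $x\in G^N$ of the codeword averages $\frac{1}{\sK_N}\sum_{m'}\Tr\{(\Sigma_x^{\gamma}\tau_{m'}\Sigma_x^{\gamma})^{\alpha}\}$. Here $\gamma=\frac{1-\alpha}{2\alpha}$, the $\tau_{m'}$ are Eve's codeword states, and the $\Sigma_x$ are translated comparison states. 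Each such average is at least $e^{(\alpha-1)I_\alpha(M';E^N)}$. The lemma therefore follows from ``minimum $\le$ average'' and unitary invariance alone.

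You instead use convexity of $\sigma\mapsto Q_\alpha(\rho\|\sigma)$ to twirl an optimizer into a $G$-invariant one. After that, every per-symbol factor equals the single-copy optimum, and the factorization of Lemma~\ref{lem:type_bound} finishes the argument.

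The paper's argument needs no convexity input. Yours needs it, but it yields a slightly stronger fact: one product comparison state $\sigma^{\star\otimes N}$ under which every codeword individually contributes exactly $e^{(\alpha-1)N I_\alpha(A;E)[\rho_{AE}(P_1)]}$. The invariant optimizer is also the same reduction that Lemma~\ref{L3} later exploits.

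One small correction to your justification. Joint convexity of the sandwiched quantity $Q_\alpha$ is the Frank--Lieb result for $\alpha>1$; for $1/2\le\alpha<1$ the quantity is jointly concave instead, so the phrase ``for $\alpha\ge1/2$'' misattributes it. Convexity in $\sigma$ alone, which is all you use, also follows directly. Rewrite $Q_\alpha(\rho\|\sigma)=\Tr\{(\rho^{1/2}\sigma^{r}\rho^{1/2})^\alpha\}$ with $r=(1-\alpha)/\alpha\in(-1,0)$. Then combine operator convexity of $t\mapsto t^{r}$ with monotonicity and convexity of $X\mapsto\Tr X^\alpha$. These are the same ingredients as steps (a)--(b) in Lemma~\ref{L3}, and they work for every $\alpha>1$.
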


\begin{proof}
Let $\tau_E:=\rho_{E|P_1}$ and $U^{(N)}_{E,x}:=U_{E,x_1}\otimes\cdots\otimes U_{E,x_N}$
for $x\in G^N$, so that the conditional state of Eve for the codeword
$\phi_N(m')$ is
\[
\tau_{m'}:=U^{(N)}_{E,\phi_N(m')}\,\tau_E^{\otimes N}\,(U^{(N)}_{E,\phi_N(m')})^\dagger .
\]  Let
$\sigma^*_E$ be an optimizer in the single-copy definition of $I_\alpha(A;E)[\rho_{AE}(P_1)]$.  The
product state $(\sigma^*_E)^{\otimes N}$ is an admissible comparison state
for the $N$-copy orbit, and the orbit sum factorizes.
Write $\Sigma:=(\sigma^*_E)^{\otimes N}$ and $\gamma:=\frac{1-\alpha}{2\alpha}$.
Then
\begin{equation}
e^{(\alpha-1)NI_\alpha(A;E)[\rho_{AE}(P_1)]}
=\sum_{x\in G^N}\frac1{|G|^N}\Tr\Bigl\{\Bigl(\Sigma^{\gamma}\,
U^{(N)}_{E,x}\tau_E^{\otimes N}(U^{(N)}_{E,x})^\dagger\,\Sigma^{\gamma}\Bigr)^{\alpha}\Bigr\}.
\end{equation}
Since $x\,\phi_N(m')$ is uniformly
distributed on $G^N$ when $x$ is, the right-hand side is unchanged if $x$ is
replaced by $x\,\phi_N(m')$ and the result averaged over $m'$:
\begin{align}
e^{(\alpha-1)NI_\alpha(A;E)[\rho_{AE}(P_1)]}
&=\sum_{x\in G^N}\frac1{|G|^N}\sum_{m'=1}^{\sK_N}\frac1{\sK_N}
\Tr\Bigl\{\Bigl(\Sigma^{\gamma}\,U^{(N)}_{E,x}\,\tau_{m'}\,(U^{(N)}_{E,x})^\dagger\,\Sigma^{\gamma}\Bigr)^{\alpha}\Bigr\}\notag\\
&=\sum_{x\in G^N}\frac1{|G|^N}\sum_{m'=1}^{\sK_N}\frac1{\sK_N}
\Tr\Bigl\{\Bigl(\Sigma_x^{\gamma}\,\tau_{m'}\,\Sigma_x^{\gamma}\Bigr)^{\alpha}\Bigr\},
\qquad \Sigma_x:=(U^{(N)}_{E,x})^\dagger\,\Sigma\,U^{(N)}_{E,x},
\end{align}
where $\Sigma_x$ is again a state.  For each $x$ the inner sum is at least the minimum over
all comparison states, which is $e^{(\alpha-1)I_\alpha(M';E^N)}$, and
averaging over $x$ preserves the inequality.
\end{proof}

For $P_3=\Pu$, the single-copy bound converges to
$I(V;E)_{\widetilde W^E}$ as $\alpha\downarrow1$, yielding exponential
secrecy without a composition restriction whenever
$\liminf_N\rho_{\mathrm{sac},N}>I(V;E)_{\widetilde W^E}$.
We next simplify the
comparison-state optimization using the group twirl.
Decompose $\calH_E$ into irreducible subspaces of
$U_E$,
\begin{equation}
\calH_E=\calH_{E,1}\oplus\cdots\oplus\calH_{E,\sJ},
\end{equation}
and assume for simplicity that these irreducible subspaces are pairwise
inequivalent.  Let $\rho_j$ be the completely mixed state on $\calH_{E,j}$
and, for a distribution $P_J$ on $\{1,\dots,\sJ\}$, let
$\sigma_E(P_J):=\sum_jP_J(j)\rho_j$.

\begin{lemma}\Label{L3}
Under the multiplicity-free assumption, for $1<\alpha\le2$,
\begin{equation}
I_\alpha(A;E)[\rho_{AE}(P_1)]
=\frac1{\alpha-1}\log\min_{P_J}\Tr\Bigl\{\Bigl(\rho_{E|P_1}^{\frac12}\,\sigma_E(P_J)^{\frac{1-\alpha}{\alpha}}\,\rho_{E|P_1}^{\frac12}\Bigr)^{\alpha}\Bigr\}.
\end{equation}
The value at $\alpha=1$ is the limit $\alpha\downarrow1$; negative powers
use the support convention stated above.
\end{lemma}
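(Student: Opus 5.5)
The plan is to start from the cq formula \eqref{eq:renyi_cq} applied to $\rho_{AE}(P_1)$ and to show that the minimizing comparison state may be taken $U_E$-invariant. Write $\tau_E:=\rho_{E|P_1}$, $\gamma:=\frac{1-\alpha}{2\alpha}$, and
\[
F(\sigma):=\sum_{x\in G}\frac1{|G|}\Tr\Bigl\{\bigl(\sigma^{\gamma}U_{E,x}\tau_EU_{E,x}^\dagger\sigma^{\gamma}\bigr)^{\alpha}\Bigr\}.
\]
Since $\alpha-1>0$ and $\log$ is increasing, $I_\alpha(A;E)[\rho_{AE}(P_1)]=\frac1{\alpha-1}\log\min_\sigma F(\sigma)$.

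The first step is invariance: $F(U_{E,g}\sigma U_{E,g}^\dagger)=F(\sigma)$ for every $g\in G$. To see this, use $(U\sigma U^\dagger)^\gamma=U\sigma^\gamma U^\dagger$ and unitary invariance of the trace. This turns the summand at $x$ into the summand at $g^{-1}x$, with projective phases cancelling in the conjugation. Reindexing the sum over $G$ then gives the claim, in the same way as in the proof of Lemma~\ref{L2}.

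The second step is convexity of $F$ in $\sigma$. For $1<\alpha\le2$ I would invoke the joint convexity of the sandwiched quasi-relative entropy $\widetilde Q_\alpha(\rho\|\sigma)=\Tr\{(\sigma^{\gamma}\rho\sigma^{\gamma})^\alpha\}$ for $\alpha>1$ (Frank--Lieb; Müller-Lennert et al.), which in particular gives convexity in $\sigma$ for fixed $\rho$. Summing over $x$ preserves convexity. Combining the two steps, the twirl $\bar\sigma:=\frac1{|G|}\sum_gU_{E,g}\sigma U_{E,g}^\dagger$ satisfies $F(\bar\sigma)\le F(\sigma)$. The support condition is also preserved: if $\supp\tau_E\subseteq\supp\sigma$ is needed for finiteness, the twirl only enlarges supports. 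Hence the minimum can be restricted to invariant states.

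The third step identifies the invariant states. By Schur's lemma and the multiplicity-free decomposition $\calH_E=\bigoplus_j\calH_{E,j}$ into pairwise inequivalent irreducibles, an invariant operator commutes with every $U_{E,g}$ and is therefore a combination of the projectors onto the $\calH_{E,j}$. An invariant state is thus exactly $\sigma_E(P_J)$ for some $P_J$. For such $\sigma$ every $U_{E,x}$ commutes with $\sigma^\gamma$, so each summand of $F$ equals $\Tr\{(\sigma^\gamma\tau_E\sigma^\gamma)^\alpha\}$. Finally, with $A=\sigma^\gamma\tau_E^{1/2}$, the operators $AA^\dagger$ and $A^\dagger A$ have the same nonzero spectrum, which rewrites this as $\Tr\{(\tau_E^{1/2}\sigma^{2\gamma}\tau_E^{1/2})^\alpha\}$ with $2\gamma=\frac{1-\alpha}{\alpha}$. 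This is the claimed formula.

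The remark about $\alpha=1$ follows from the convergence of $D_\alpha$ to $D$ already quoted before Lemma~\ref{lem:J_continuity}. The main obstacle is the convexity step: I have to make sure the Frank--Lieb convexity is correctly invoked for the whole range $1<\alpha\le2$ and is compatible with the support convention, that is, with extended-value convexity when $\supp\tau$ is not contained in $\supp\sigma$. As a fallback, if only the restricted range is available, I would use operator convexity of $t\mapsto t^{(1-\alpha)/\alpha}$ for exponents in $[-1,0)$, which is what the bound $\alpha\le2$ guarantees, together with a standard trace-function argument.
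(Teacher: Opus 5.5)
Your proposal is correct and is essentially the paper's proof. The paper likewise reduces to $U_E$-invariant comparison states by a convexity-plus-twirl argument, written as a chain of inequalities in which the twirl $\bbE_xU_{E,x}^\dagger\sigma_EU_{E,x}$ arises directly from the orbit average, so no separate invariance step is needed. It then identifies the twirl with $\sigma_E(P_J)$ using the multiplicity-free assumption, and finishes with commutation, the $\Tr\{(\sigma^{r/2}A\sigma^{r/2})^\alpha\}=\Tr\{(A^{1/2}\sigma^{r}A^{1/2})^\alpha\}$ rewrite, and an $\varepsilon$-regularization for rank-deficient $\sigma_E$.

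The only difference is the source of convexity. The paper uses your fallback rather than Frank--Lieb: convexity of $X\mapsto\Tr X^\alpha$, operator convexity of $t\mapsto t^{(1-\alpha)/\alpha}$, and monotonicity of trace functions. This fallback in fact works for every $\alpha>1$, because $(1-\alpha)/\alpha\in(-1,0)$ whenever $\alpha>1$; it is not the bound $\alpha\le2$ that makes it available.
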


\begin{proof}
Let $\tau:=\rho_{E|P_1}$, $r:=\frac{1-\alpha}{\alpha}\in[-\tfrac12,0)$, and
write $\bbE_x$ for the average over uniformly distributed $x\in G$.  For
$1<\alpha\le2$, using $\Tr\{(\sigma^{r/2}A\sigma^{r/2})^\alpha\}=\Tr\{(A^{1/2}\sigma^{r}A^{1/2})^\alpha\}$
for $A\ge0$,
\begin{align}
e^{(\alpha-1)I_\alpha(A;E)[\rho_{AE}(P_1)]}
&=\min_{\sigma_E}\ \bbE_x\Tr\Bigl\{\Bigl(\tau^{\frac12}U_{E,x}^\dagger\sigma_E^{\,r}U_{E,x}\tau^{\frac12}\Bigr)^{\alpha}\Bigr\}
\notag\\
&\stackrel{(a)}{\ge}\min_{\sigma_E}\Tr\Bigl\{\Bigl(\tau^{\frac12}\bigl(\bbE_xU_{E,x}^\dagger\sigma_E^{\,r}U_{E,x}\bigr)\tau^{\frac12}\Bigr)^{\alpha}\Bigr\}
\notag\\
&\stackrel{(b)}{\ge}\min_{\sigma_E}\Tr\Bigl\{\Bigl(\tau^{\frac12}\bigl(\bbE_xU_{E,x}^\dagger\sigma_EU_{E,x}\bigr)^{r}\tau^{\frac12}\Bigr)^{\alpha}\Bigr\}
\notag\\
&\stackrel{(c)}{=}\min_{P_J}\Tr\Bigl\{\Bigl(\tau^{\frac12}\sigma_E(P_J)^{r}\tau^{\frac12}\Bigr)^{\alpha}\Bigr\}
\stackrel{(d)}{=}\min_{P_J}\ \bbE_x\Tr\Bigl\{\Bigl(\tau^{\frac12}U_{E,x}^\dagger\sigma_E(P_J)^{r}U_{E,x}\tau^{\frac12}\Bigr)^{\alpha}\Bigr\}
\notag\\
&\ge\min_{\sigma_E}\ \bbE_x\Tr\Bigl\{\Bigl(\tau^{\frac12}U_{E,x}^\dagger\sigma_E^{\,r}U_{E,x}\tau^{\frac12}\Bigr)^{\alpha}\Bigr\}.
\end{align}
Step $(a)$ is the convexity of $X\mapsto\Tr X^\alpha$ on positive operators
for $1<\alpha\le2$.  To handle comparison states that are not of full rank,
replace $\sigma_E$ by $(1-\varepsilon)\sigma_E+\varepsilon I/\dim\calH_E$,
apply the chain, and let $\varepsilon\downarrow0$ using the support
convention and lower semicontinuity.  For $(b)$, the map $t\mapsto t^{r}$ is
operator convex for $r\in[-1,0]$, so
$\bbE_xU_{E,x}^\dagger\sigma_E^{\,r}U_{E,x}\ge(\bbE_xU_{E,x}^\dagger\sigma_EU_{E,x})^{r}$,
and multiplying by $\tau^{1/2}$ on both sides and using
\cite[Lemma~A.13]{Springer} gives the trace inequality.  Because the
irreducible subspaces are pairwise inequivalent, the twirl
$\bbE_xU_{E,x}^\dagger\sigma_EU_{E,x}$ is of the form
$\sigma_E(P_J)$, which is $(c)$; and $\sigma_E(P_J)$ commutes with every
$U_{E,x}$, which is $(d)$.  The chain begins and ends with the same
quantity, so all inequalities are equalities.
\end{proof}

\begin{remark}[Multiplicities]
Without the multiplicity-free assumption the twirl of a state has the form
$\bigoplus_j(I_{d_j}/d_j)\otimes A_j$, with $A_j\ge0$ acting on the
multiplicity space of the $j$-th irreducible component and
$\sum_j\Tr A_j=1$.  Lemma~\ref{L3} then holds with the minimum taken over
such states instead of over $P_J$; we do not need this extension.  In the
binary phase-shift-keying examples of Section~\ref{sec:bpsk} the two output
states span an invariant subspace of dimension at most two, so the
finite-dimensional calculations apply even though the ambient space is a
Fock space. To restrict the comparison-state minimization as well,
let $S$ project onto the joint support of the two output states and choose
a state $\omega$ supported on $S$. The channel
$\mathcal T(\sigma)=S\sigma S+\Tr[(I-S)\sigma]\omega$ fixes each output
state. Then, data processing for the sandwiched R\'enyi divergence ($\alpha>1$)
gives
$D_\alpha(\rho_{AE}\|\rho_A\otimes\mathcal T(\sigma))
\le D_\alpha(\rho_{AE}\|\rho_A\otimes\sigma)$,
so the minimization may be performed on $S$. 
For nonzero BPSK amplitudes,
the restricted parity representation has one even and one odd component.
\end{remark}

Combining Lemmas~\ref{L2} and~\ref{L3} with \eqref{EE1}, the leakage of the
uniform scheme satisfies
\begin{equation}
\bbE_F\,\delta^E_N\ \le\ \min_{0\le t\le1}2^{\frac{1-t}{1+t}}
\exp\Bigl[-\frac{t}{1+t}N\rho_{\mathrm{sac},N}\Bigr]
\Biggl(\min_{P_J}\Tr\Bigl\{\Bigl(\rho_{E|P_1}^{\frac12}\,\sigma_E(P_J)^{\frac{-t}{1+t}}\,\rho_{E|P_1}^{\frac12}\Bigr)^{1+t}\Bigr\}\Biggr)^{\frac{N}{1+t}}.
\Label{EE4}
\end{equation}

\section{Examples}\Label{sec:examples}

\subsection{Cyclic diagonal representation}\Label{subsec:cyclic_example}

Let $G=\bbZ_d$ and let $\calH_E$ be $d$-dimensional with basis
$\{\ket j\}_{j=0}^{d-1}$ and representation $U(x)\ket j=e^{2\pi i jx/d}\ket j$.
Take a pure seed
\[
\ket{\psi_E}=\sum_{j=0}^{d-1}\sqrt{q_j}\,\ket j,\qquad \rho_E=\ket{\psi_E}\!\bra{\psi_E},
\]
with $q$ a distribution on $\{0,\dots,d-1\}$.  The irreducible subspaces of
$U$ are the lines $\bbC\ket j$, pairwise inequivalent, so
Lemma~\ref{L3} applies with $\sigma_E(P_J)=\sum_jP_J(j)\ket j\!\bra j$.  We
evaluate the exponent for the point-mass specialization $P_1=\delta_0$, for
which $\rho_{E|P_1}=\rho_E$. Sums below run over the
support of $q$, and $P_J$ may be taken supported there as well.  For
$t>0$,
\begin{equation}
\Tr\Bigl\{\Bigl(\rho_E^{\frac12}\sigma_E(P_J)^{\frac{-t}{1+t}}\rho_E^{\frac12}\Bigr)^{1+t}\Bigr\}
=\bra{\psi_E}\sigma_E(P_J)^{\frac{-t}{1+t}}\ket{\psi_E}^{\,1+t}
=\Bigl(\sum_jq_jP_J(j)^{\frac{-t}{1+t}}\Bigr)^{1+t},
\end{equation}
because $\rho_E$ has rank one.  Put
$s=\frac{1+t}{1+2t}\in(0,1)$ and $r=\frac{s}{s-1}=-\frac{1+t}{t}$.  The
reverse H\"older inequality \cite[Section~2.3.4]{Group2} gives
\begin{align}
\sum_jq_jP_J(j)^{\frac{-t}{1+t}}
&\ \ge\ \Bigl(\sum_jq_j^{s}\Bigr)^{\frac1s}\Bigl(\sum_j\bigl(P_J(j)^{\frac{-t}{1+t}}\bigr)^{r}\Bigr)^{\frac1r}
\notag\\
&\ =\ \Bigl(\sum_jq_j^{\frac{1+t}{1+2t}}\Bigr)^{\frac{1+2t}{1+t}}\Bigl(\sum_jP_J(j)\Bigr)^{-\frac{t}{1+t}}
=\Bigl(\sum_jq_j^{\frac{1+t}{1+2t}}\Bigr)^{\frac{1+2t}{1+t}},
\end{align}
with equality when $P_J(j)\propto q_j^{(1+t)/(1+2t)}$.  Hence
\begin{equation}
\min_{P_J}\Tr\Bigl\{\Bigl(\rho_E^{\frac12}\sigma_E(P_J)^{\frac{-t}{1+t}}\rho_E^{\frac12}\Bigr)^{1+t}\Bigr\}
=\Bigl(\sum_jq_j^{\frac{1+t}{1+2t}}\Bigr)^{1+2t}
=\Bigl(\sum_jq_j^{\frac{\alpha}{2\alpha-1}}\Bigr)^{2\alpha-1},\qquad\alpha=1+t,
\end{equation}
and \eqref{EE4} becomes, for a coding blocklength $N$,
\begin{equation}
\bbE_F\,\delta^E_N\ \le\ \min_{0\le t\le1}2^{\frac{1-t}{1+t}}
\exp\Bigl[-\frac{t}{1+t}N\rho_{\mathrm{sac},N}\Bigr]
\Bigl(\sum_jq_j^{\frac{1+t}{1+2t}}\Bigr)^{\frac{N(1+2t)}{1+t}}.
\Label{EE5}
\end{equation}
At $t=0$ the right-hand side is understood as its continuous extension,
which equals $2$; the useful range is $t>0$.

\subsection{Quantum BPSK}\Label{subsec:qbpsk}

Let $\ket\alpha=\sum_{n\ge0}e^{-|\alpha|^2/2}\alpha^n(n!)^{-1/2}\ket n$ be a
coherent state, let $G=\bbZ_2$, and take the seeds
$\rho_B=\ket\alpha\!\bra\alpha$ and $\rho_E=\ket{\alpha'}\!\bra{\alpha'}$ with
the representations $U_{B,0}=U_{E,0}=I$ and
$U_{B,1}=U_{E,1}=\sum_n(-1)^n\ket n\!\bra n$, the photon-number parity.
Then
\begin{equation}
W_{B,0}=\ket\alpha\!\bra\alpha,\quad W_{B,1}=\ket{-\alpha}\!\bra{-\alpha},\qquad
W_{E,0}=\ket{\alpha'}\!\bra{\alpha'},\quad W_{E,1}=\ket{-\alpha'}\!\bra{-\alpha'},
\Label{BNT}
\end{equation}
which is binary phase-shift keying (BPSK) seen by two receivers.  When
$|\alpha'|>|\alpha|$ Eve's states are more distinguishable than Bob's and
the private capacity is zero; when $|\alpha'|<|\alpha|$ the channel is
degraded and the private capacity is additive.  Indeed, when $|\alpha'|<|\alpha|$, a pure-loss channel of
transmissivity $|\alpha'/\alpha|^2$, followed by a phase rotation, maps
$\ket{\pm\alpha}$ to $\ket{\pm\alpha'}$; when the inequality is
reversed the same construction maps Eve's states to Bob's.
Neither case is a candidate
for nonadditivity.  The hybrid channel below is what one obtains from BPSK
when Bob commits to a fixed measurement; see Section~\ref{sec:bpsk}.

\subsection{Hybrid binary example: a symmetric channel to Bob and pure states to Eve}\Label{subsec:hybrid_binary}

Let $G=\bbZ_2$ and take the seeds
\begin{equation}
\rho_B=\begin{pmatrix}1-q&0\\0&q\end{pmatrix},\qquad
\rho_E=\begin{pmatrix}q'&\sqrt{q'(1-q')}\\\sqrt{q'(1-q')}&1-q'\end{pmatrix}
=\ket{\psi_E}\!\bra{\psi_E},
\end{equation}
where $\ket{\psi_E}=\sqrt{q'}\ket0+\sqrt{1-q'}\ket1$,
with the representations
\begin{equation}
U_{B,0}=U_{E,0}=I,\qquad
U_{B,1}=\begin{pmatrix}0&1\\1&0\end{pmatrix},\qquad
U_{E,1}=\begin{pmatrix}1&0\\0&-1\end{pmatrix}.
\end{equation}
Bob therefore sees a binary symmetric channel with crossover probability
$q$, and Eve receives one of the two pure states
$\ket{\psi_x}=\sqrt{q'}\ket0+(-1)^x\sqrt{1-q'}\ket1$, whose overlap is
$\langle\psi_0|\psi_1\rangle=2q'-1$, of magnitude $1-2q'$ for $q'<1/2$.  For
the input distribution $\Bern(x)$, with $x$ the probability of the symbol
$1$, Bob's output is $\Bern(x\star q)$ with $x\star q:=x(1-q)+(1-x)q$, and
Eve's state is
\begin{equation}
(1-x)W_{E,0}+xW_{E,1}=\begin{pmatrix}q'&(1-2x)\sqrt{q'(1-q')}\\(1-2x)\sqrt{q'(1-q')}&1-q'\end{pmatrix},
\end{equation}
with eigenvalues $\tfrac12\bigl(1\pm\sqrt{1-16q'(1-q')x(1-x)}\bigr)$.  Writing
$f_1(x,q):=h(x\star q)$ and
$f_2(x,q'):=h\bigl(\tfrac12\bigl(1-\sqrt{1-16q'(1-q')x(1-x)}\bigr)\bigr)$,
the single-use entropy difference is
\begin{equation}
\DeltaBE(x;q,q')=f_1(x,q)-f_2(x,q'),
\Label{eq:single_use_Delta}
\end{equation}
a function symmetric about $x=1/2$.  (The off-diagonal coefficient of
Eve's state is $2x-1$ if one writes $xW_{E,0}+(1-x)W_{E,1}$ instead; the two
matrices are conjugate by $U_{E,1}$ and have the same entropy.)

Theorem~\ref{thm:main} asks for the maximum and the minimum of
\eqref{eq:single_use_Delta} over $x\in[0,1/2]$.  For $q=0.05$ and
$q'=0.15$ the minimum is $\simeq0.191854$, attained at $x\simeq0.0162174$,
and the maximum is $\simeq0.270438$, attained at the uniform input
$x=1/2$; the latter was verified numerically on a fine grid together with
the sign of the derivative below, and it is consistent with
$\DeltaBE(0)=h(0.05)\simeq0.198515<0.270438$.  The equality condition of
Theorem~\ref{thm:main} therefore applies, and
\begin{equation}
C^{(1)}(W_B,W_E)\simeq0.078584\ \text{nats}.
\end{equation}
Figure~\ref{1copy} shows the function.  Its minimum is attained at a
nonzero $x$, which is the single-use face of noisy preprocessing: the best
noise distribution $P_1$ in the uniform bound \eqref{eq:uniform_bound} is
not a point mass.  For $q=0.05$ and $q'=0.12$ the minimum is
$\simeq0.196473$ at $x\simeq0.0052765$, the maximum $\simeq0.326222$ is
again at $x=1/2$, and $C^{(1)}\simeq0.129749$ nats.

\begin{figure}[t]
\centering
\begin{tikzpicture}[x=11cm,y=38cm]
\draw[->] (0,0.185) -- (0.53,0.185) node[right] {$x$};
\draw[->] (0,0.185) -- (0,0.282) node[above] {$f_1(x,0.05)-f_2(x,0.15)$};
\foreach \x/\lab in {0/0,0.25/0.25,0.5/0.5}{\draw (\x,0.185) -- (\x,0.1835) node[below] {\scriptsize \lab};}
\foreach \y/\lab in {0.19/0.19,0.23/0.23,0.27/0.27}{\draw (0,\y) -- (-0.006,\y) node[left] {\scriptsize \lab};}
\draw[thick] plot[smooth] coordinates {
(0.000,0.198515)
(0.020,0.191983)
(0.040,0.195210)
(0.060,0.200286)
(0.080,0.205940)
(0.100,0.211706)
(0.120,0.217374)
(0.140,0.222840)
(0.160,0.228047)
(0.180,0.232967)
(0.200,0.237585)
(0.220,0.241894)
(0.240,0.245891)
(0.260,0.249576)
(0.280,0.252952)
(0.300,0.256020)
(0.320,0.258785)
(0.340,0.261249)
(0.360,0.263415)
(0.380,0.265286)
(0.400,0.266865)
(0.420,0.268154)
(0.440,0.269154)
(0.460,0.269868)
(0.480,0.270296)
(0.500,0.270438)
};
\fill (0.0162,0.191854) circle[radius=1.1pt];
\node[align=left, anchor=west] at (0.05,0.1925) {\scriptsize minimum\\[-1mm]\scriptsize $(0.0162,0.191854)$};
\end{tikzpicture}
\caption{The single-use entropy difference $f_1(x,0.05)-f_2(x,0.15)$ of the hybrid binary channel as a function of the input probability $x$.  The minimum sits at a nonzero $x$, so the optimal sender noise is not a point mass: noisy preprocessing already pays off at a single channel use.}
\Label{1copy}
\end{figure}

\medskip
\noindent\textit{Derivative.}  For numerical work it is convenient to have
the derivative of \eqref{eq:single_use_Delta} on $x\in(0,1/2)$.  With
$a(x):=x\star q$, $s(x):=\sqrt{1-16q'(1-q')x(1-x)}$ and $b(x):=(1-s(x))/2$,
\begin{equation}
\frac{d}{dx}\DeltaBE(x;q,q')
=(1-2q)\log\frac{1-a(x)}{a(x)}
-\frac{4q'(1-q')(1-2x)}{s(x)}\log\frac{1+s(x)}{1-s(x)}.
\Label{eq:derivative}
\end{equation}
The stationary points reported above are the zeros of this expression.

\subsection{The two-copy functional}\Label{subsec:two_copy_binary}

For a distribution $P$ on $\bbZ_2^2$ the two-copy entropy difference is
\begin{equation}
\DeltaBE_2(P;q,q'):=H\Bigl(\sum_{g\in\bbZ_2^2}P(g)\,W^B_{g_1}\otimes W^B_{g_2}\Bigr)
-H\Bigl(\sum_{g\in\bbZ_2^2}P(g)\,W^E_{g_1}\otimes W^E_{g_2}\Bigr).
\end{equation}
Bob's part is classical: his output distribution is the XOR convolution
of $P$ with two independent $\Bern(q)$ flips,
\begin{equation}
P_B(y)=\sum_{g\in\bbZ_2^2}P(g)\,q^{|y\oplus g|}(1-q)^{2-|y\oplus g|},\qquad y\in\bbZ_2^2,
\end{equation}
where $|\cdot|$ is the Hamming weight, and the first entropy is its Shannon
entropy.  Eve's part has a transparent structure.  In the computational
basis, $\langle j|\psi_x\rangle=\sqrt{w_j}\,(-1)^{x\cdot j}$ with
$w_j=q'^{\,2-|j|}(1-q')^{|j|}$ for $j\in\bbZ_2^2$, so
\begin{equation}
\sum_gP(g)\,\ket{\psi_g}\!\bra{\psi_g}=D\,\bigl[c_P(j\oplus k)\bigr]_{j,k\in\bbZ_2^2}\,D,
\qquad
D=\operatorname{diag}(\sqrt{w_j}),
\Label{eq:eve_two_copy_structure}
\end{equation}
where
\begin{equation}
c_P(d):=\sum_{g\in\bbZ_2^2}P(g)(-1)^{g\cdot d},\qquad d\in\bbZ_2^2.
\end{equation}
The numbers $c_P(d)$ are the Fourier coefficients of $P$ on $\bbZ_2^2$, with
$c_P(00)=1$.  Written out in the basis order $(00,01,10,11)$ and with
$s:=\sqrt{q'(1-q')}$, Eve's state is
\begin{equation}
\begin{pmatrix}
q'^2 & c_P(01)\,q's & c_P(10)\,q's & c_P(11)\,q'(1-q')\\
c_P(01)\,q's & q'(1-q') & c_P(11)\,q'(1-q') & c_P(10)\,(1-q')s\\
c_P(10)\,q's & c_P(11)\,q'(1-q') & q'(1-q') & c_P(01)\,(1-q')s\\
c_P(11)\,q'(1-q') & c_P(10)\,(1-q')s & c_P(01)\,(1-q')s & (1-q')^2
\end{pmatrix},
\Label{NM2}
\end{equation}
and the second entropy in $\DeltaBE_2$ is the entropy of its spectrum.  In
terms of the parametrization of Table~\ref{tab:two_copy_params},
$c_P(10)=t_1-t_2+t_3$, $c_P(01)=t_1-t_2-t_3$ and $c_P(11)=2t_1+2t_2-1$.
The same formula \eqref{eq:eve_two_copy_structure} holds for $n$ copies with
$j,k,d\in\bbZ_2^n$ and $w_j=q'^{\,n-|j|}(1-q')^{|j|}$, and is used to evaluate Eve's entropy in the numerical comparison
and in Section~\ref{sec:witness}.

\begin{table}[t]
\centering
\caption{Parametrization of a distribution on $\bbZ_2^2$.  The range of $(t_1,t_2,t_3)$ is $t_1,t_2\ge0$ and $|t_3|\le1-t_1-t_2$.}
\Label{tab:two_copy_params}
\begin{tabular}{cc}
\toprule
Input pair & Probability \\
\midrule
$(0,0)$ & $t_1$ \\
$(1,1)$ & $t_2$ \\
$(0,1)$ & $(1-t_1-t_2+t_3)/2$ \\
$(1,0)$ & $(1-t_1-t_2-t_3)/2$ \\
\bottomrule
\end{tabular}
\end{table}

\subsection{The pair family at two uses and the repetition subfamily}\Label{subsec:pair_two_uses}

Corollary~\ref{cor:finite_block_covariant} with $G^2=\bbZ_2^2$ defines
\begin{equation}
R_2^{\pair}(q,q')=\frac12\max_{P_1,P_3\in\calP(\bbZ_2^2)}
\bigl[\DeltaBE_2(P_3*P_1;q,q')-\DeltaBE_2(P_1;q,q')\bigr],
\end{equation}
the factor $\tfrac12$ converting the two-use value into a rate per channel
use.  Both distributions are optimization variables; fixing $P_3$ to the
uniform distribution on all of $\bbZ_2^2$ would give a valid but needlessly
restricted bound.  For a feasible pair found numerically we write
\begin{equation}
R_2^{\ach}(q,q'):=\frac12\bigl[\DeltaBE_2(P_3*P_1;q,q')-\DeltaBE_2(P_1;q,q')\bigr]\ \le\ R_2^{\pair}(q,q'),
\end{equation}
the achieved rate of the reported pair.

The noisy repetition codes of \cite{SRS,TBR} sit inside this family.  Let
$P_3^{\rep}$ be the uniform distribution on the repetition subgroup
$\{00,11\}$ (not on the whole group) and let
$P_{1,\lambda}=\Bern(\lambda)^{\otimes2}$, that is,
$P_{1,\lambda}(00)=(1-\lambda)^2$, $P_{1,\lambda}(01)=P_{1,\lambda}(10)=\lambda(1-\lambda)$,
$P_{1,\lambda}(11)=\lambda^2$.  The pair $(P_{1,\lambda},P_3^{\rep})$ is the
two-use repetition code with a uniform logical bit and independent physical
flips of probability $\lambda$: the outer variable $V\in\{00,11\}$ is the
repeated logical bit and $Z\sim P_{1,\lambda}$ is the physical noise.  A
slightly larger subfamily is obtained by also flipping the logical bit with
probability $\mu$ before it is repeated.  If $U$ is the uniform logical bit,
$L\sim\Bern(\mu)$ is independent of it and $T=U\oplus L$ is repeated and
then subjected to the physical flips, the conditional input distribution
given $U=0$ is
\begin{equation}
P_{1,\lambda,\mu}:=(1-\mu)\Bern(\lambda)^{\otimes2}+\mu\,\Bern(1-\lambda)^{\otimes2},
\Label{eq:logical_flip_P1}
\end{equation}
and the one given $U=1$ is its translate by $11$.  So the logically
prefixed repetition code is the pair $(P_{1,\lambda,\mu},P_3^{\rep})$, and
its rate is
\begin{equation}
r_{2,\lambda,\mu}(q,q')=\tfrac12\bigl[\DeltaBE_2(P_3^{\rep}*P_{1,\lambda,\mu})-\DeltaBE_2(P_{1,\lambda,\mu})\bigr].
\end{equation}
The same construction with $\Bern(\lambda)^{\otimes n}$, $\Bern(1-\lambda)^{\otimes n}$
and $P_3^{\rep}$ uniform on $\{0^n,1^n\}$ gives the $n$-use version
$r_{n,\lambda,\mu}$.

For exchangeable two-use noise $P_1=(a,b,b,c)$, the parameters in
\eqref{eq:logical_flip_P1} satisfy
\[
b=\lambda(1-\lambda),\qquad a-c=(1-2\mu)(1-2\lambda).
\]
Such a representation exists when $b\le1/4$ and
$|a-c|\le\sqrt{1-4b}$, allowing a fixed complement of the input labels.
These identities identify the repetition parameters of the pairs found
numerically. They do not cover all exchangeable distributions or all
outer distributions. At larger blocklengths we optimize the full
Hamming-weight profile of exchangeable noise, rather than restricting it
to complementary binomial components.

\section{Repetition-code benchmarks and numerical comparison}\Label{sec:TBR}

\subsection{The Tikku--Berta--Renes line and three repetition benchmarks}\Label{subsec:benchmarks}

Tikku, Berta and Renes \cite{TBR} study the classical-input wiretap channel
\begin{equation}
W_{Y|X}[p]=\mathrm{BSC}(p),\qquad W_{C|X}[p]=\mathrm{BPC}(1-2p),
\end{equation}
a binary symmetric channel to Bob and a binary pure-state channel to Eve
whose two states have overlap of magnitude $1-2p$.  In the notation of
Section~\ref{subsec:hybrid_binary} this is the line
\begin{equation}
q=p,\qquad q'=p,
\end{equation}
and all computations in this section are on this line, for
$p\in[0.121,0.125]$, on the fifty-point grid $p_k=(5929+4k)/49000$,
$k=0,\dots,49$.  The interval is chosen because the single-use private
information vanishes inside it: \cite{TBR} show that $C^{(1)}=0$ for
$p\ge\hat p\approx0.1241$, and our own evaluation of
Lemma~\ref{lem:binary_single_use} puts $\hat p\approx0.12412$.

The repetition code with noisy preprocessing, $\Rep_{n,\lambda}$, maps a
logical bit $u$ to $u^n$ and then flips each of the $n$ physical bits
independently with probability $\lambda$. We distinguish three rates.

The first takes the logical bit uniform and optimizes only $\lambda$:
\begin{equation}
R_n^{\rep,0}(p):=\max_{0\le\lambda\le1/2}\frac1n\bigl[I(U;B^n)-I(U;E^n)\bigr]_{U\ \text{uniform}}
=\max_{0\le\lambda\le1/2}\frac1n\bigl[B_n(\tilde p_\lambda)-E_{n,\lambda}(p)\bigr],
\Label{eq:R_rep0}
\end{equation}
where $\tilde p_\lambda=p(1-\lambda)+(1-p)\lambda$ is Bob's effective
crossover probability,
\begin{equation}
B_n(\tilde p)=\log2-\sum_{s\in\{0,1\}^{n-1}}P^{(n)}_{\tilde p}(s)\,H\bigl(P^{(n)}_{\tilde p}(U|s)\bigr)
\end{equation}
is the mutual information of the repetition code over $\mathrm{BSC}(\tilde p)$
written through the syndrome $s$ (with $P^{(n)}_{\tilde p}(0,s)=\tilde p^{\,|s|}(1-\tilde p)^{n-|s|}$,
$P^{(n)}_{\tilde p}(1,s)=\tilde p^{\,n-|s|}(1-\tilde p)^{|s|}$), and
\begin{equation}
E_{n,\lambda}(p)=H\Bigl(\tfrac12\sigma_{0,\lambda}^{\otimes n}+\tfrac12\sigma_{1,\lambda}^{\otimes n}\Bigr)-nH(\sigma_{0,\lambda}),
\qquad \sigma_{0,\lambda}=(1-\lambda)\rho_0+\lambda\rho_1,\quad \sigma_{1,\lambda}=U_{E,1}\sigma_{0,\lambda}U_{E,1}^\dagger,
\end{equation}
is Eve's Holevo information, $\rho_0,\rho_1$ being her two pure states.
This is the uniform-input expression of \cite{TBR}, and in the language of
Section~\ref{subsec:pair_two_uses} it is the pair $(P_{1,\lambda},P_3^{\rep})$.

The second also optimizes a logical flip $\mu$ before the repetition, as in
\eqref{eq:logical_flip_P1}:
\begin{equation}
R_n^{\rep,\log}(p):=\max_{\lambda,\mu\in[0,1/2]}r_{n,\lambda,\mu}(p),
\Label{eq:R_replog}
\end{equation}
where
\begin{align}
r_{n,\lambda,\mu}=\frac1n\Bigl[&H\Bigl(\tfrac{\omega^B_0+\omega^B_1}{2}\Bigr)-H\bigl((1-\mu)\omega^B_0+\mu\omega^B_1\bigr)\notag\\
&-H\Bigl(\tfrac{\omega^E_0+\omega^E_1}{2}\Bigr)+H\bigl((1-\mu)\omega^E_0+\mu\omega^E_1\bigr)\Bigr]
\end{align}
and $\omega^B_u,\omega^E_u$ are the outputs of
$W^{\otimes n}\circ\Rep_{n,\lambda}$ for logical input $u$; the two
conditional entropies for $U=0,1$ agree by covariance, which is why only
one appears.

The third is the full noisy-repetition benchmark of \cite{TBR}:
\begin{equation}
R_n^{\rep}(p):=\frac1n\max_{0\le\lambda\le1/2}P_1\bigl(W^{\otimes n}\circ\Rep_{n,\lambda}\bigr),
\Label{eq:R_rep_full}
\end{equation}
the full single-letter private information of the repetition-preprocessed
channel, in which the logical bit may be preprocessed arbitrarily, with
an auxiliary variable of any size.  Since the preprocessed channel has a
binary input, Lemma~\ref{lem:binary_single_use} gives it explicitly:
with $F_\lambda(x):=\DeltaBE_n\bigl(x\Bern(1-\lambda)^{\otimes n}+(1-x)\Bern(\lambda)^{\otimes n}\bigr)$
the entropy difference as a function of the logical prior,
\begin{equation}
R_n^{\rep}(p)=\frac1n\max_{\lambda}\max_{x\in[0,1]}\bigl[F_\lambda(x)-F_\lambda^{\mathrm{vex}}(x)\bigr].
\Label{eq:R_rep_envelope}
\end{equation}
The three benchmarks are nested,
\begin{equation}
R_n^{\rep}\ \ge\ R_n^{\rep,\log}\ \ge\ R_n^{\rep,0},
\Label{eq:benchmark_nesting}
\end{equation}
because the logical flip is one particular two-point decomposition of the
uniform prior and no flip is one particular logical flip.  All three are
achievable rates.  The first two constructions belong to the covariant pair
family; the third does not in general, because its conditional input
distributions $x_v\Bern(1-\lambda)^{\otimes n}+(1-x_v)\Bern(\lambda)^{\otimes n}$
need not be translates of one another, but whenever the optimal
decomposition in \eqref{eq:R_rep_envelope} is the symmetric two-point one,
which is what we find below, its value is attained inside the family.

Optimizing the logical flip matters near the threshold. At
$p=6081/49000\approx0.12410$,
$R_3^{\rep,0}\simeq1.40\times10^{-7}$ but $R_3^{\rep,\log}\simeq2.57\times10^{-5}$,
a factor of about $180$. We therefore compare against the optimized
benchmark, including logical preprocessing.

We evaluated \eqref{eq:R_rep_envelope} at every grid point and every
$n\in\{2,3,4\}$ by sampling the logical prior and computing the lower
convex hull (Appendix~\ref{app:numerics}).  At every positive point the maximizing
decomposition was the symmetric two-point one at $x=1/2$: numerical
evaluation of the full repetition benchmark agrees with the optimized
logical-flip construction on the tested grid, to within
$7.1\times10^{-9}$ nats per use, consistent with the discretization of the
prior.  The figures and tables below show $R_n^{\rep,\log}$, computed by
multi-start local optimization over $(\lambda,\mu)$, and we use these
values as numerical estimates of the full benchmark $R_n^{\rep}$.

\subsection{The single-use reference curve}\Label{subsec:single_use_reference}

The natural single-use reference is the uniform bound
\eqref{eq:uniform_bound},
\begin{equation}
R_1(p):=\DeltaBE(\tfrac12;p,p)-\min_{0\le x\le1}\DeltaBE(x;p,p),
\end{equation}
an achievable rate for every $p$.  On this line the equality condition of
Theorem~\ref{thm:main} fails: $\DeltaBE(0)=h(p)$, since a deterministic
input gives Bob a $\Bern(p)$ output and Eve a pure state, while
$\DeltaBE(\tfrac12)=\log2-h(p)$, and $h(p)>\log2-h(p)$ throughout
$[0.121,0.125]$.  So the sufficient condition of Theorem~\ref{thm:main}
does not apply on this interval, and $R_1$ is not identified with
$C^{(1)}$ by the theorem.  Lemma~\ref{lem:binary_single_use}
settles the matter directly.  Numerically, $\DeltaBE(\cdot;p,p)$ has two
symmetric global minima at $x^*$ and $1-x^*$, is convex outside
$[x^*,1-x^*]$ and unimodal inside, so its lower convex envelope is
$\DeltaBE$ itself outside and the constant $\min\DeltaBE$ inside, and the gap
$\DeltaBE-\DeltaBE^{\mathrm{vex}}$ is largest at $x=1/2$, where it equals
$R_1$.  We verified this on the grid: at the 39 points with $C^{(1)}>0$
the maximizing $x$ in \eqref{eq:binary_single_use} is $1/2$ and
$|C^{(1)}-R_1|<9.8\times10^{-12}$, and at the remaining points,
$p\ge6085/49000$, both computed values vanish to numerical precision.
The curve $R_1$ in Fig.~\ref{fig:rates} may thus be read as $C^{(1)}$,
with the understanding that this identification is numerical except at the
parameter of Theorem~\ref{thm:witness}, where $C^{(1)}=0$ is proved.

\subsection{Two uses: agreement with logically prefixed repetition}\Label{subsec:two_use_results}

For $n=2$ we searched the full pair family $(P_1,P_3)\in\calP(\bbZ_2^2)^2$
at every grid point, combining differential evolution, exchangeable-noise
searches and full-pair local refinement
(Appendix~\ref{app:numerics}). No improvement over logically prefixed
repetition exceeded the numerical tie tolerance. The retained pairs have
$P_3$ uniform on $\{00,11\}$ and $P_1$ of the form
\eqref{eq:logical_flip_P1}. At $p=6081/49000$, for instance, the
retained pair has
\[
P_1\simeq(0.36372,\,0.21717,\,0.21717,\,0.20194),\qquad P_3=(\tfrac12,0,0,\tfrac12),
\qquad R_2^{\ach}\simeq7.5961\times10^{-6},
\]
and matching $b=\lambda(1-\lambda)$ and $a-c=(1-2\mu)(1-2\lambda)$ gives
$\lambda\simeq0.31880$, $\mu\simeq0.27679$; the uniform-logical-input value
at the same point is $R_2^{\rep,0}\simeq6.85\times10^{-8}$.  The whole gap
between the two is the logical flip.  Table~\ref{tab:two_use} lists the
values at a few grid points, and panel (a) of Fig.~\ref{fig:rates} shows the
curves; the achieved two-use curve lies on the benchmark.

\subsection{Three and four uses: noise beyond physical and logical flips}\Label{subsec:three_four_use_results}

For $n=3,4$, the retained pairs have $P_3$ uniform on
$\{0^n,1^n\}$ and exchangeable noise $P_1$. Optimizing its Hamming-weight
profile gives rates above the complementary-binomial family
\eqref{eq:logical_flip_P1}.

At $p=6081/49000$, the three-use profile is approximately
\[
\bigl(P_1(|z|=0),P_1(|z|=1),P_1(|z|=2),P_1(|z|=3)\bigr)
\simeq(0.278642,0.386189,0.248223,0.086946).
\]
Each word of weight $j$ receives the corresponding mass divided by
$\binom3j$. This particular profile admits the representation
\[
0.716526\,\mathrm{Bin}(3,0.282895)
+0.283474\,\mathrm{Bin}(3,0.629531)
\]
to the displayed precision. Operationally, flip the logical bit with
probability $0.283474$, and use physical flip probability $0.282895$ if
it was not flipped and $0.370469$ if it was. The standard construction
requires these two physical noise levels to be equal. The achieved rate
is $R_3^{\ach}\simeq2.7574\times10^{-5}$, compared with
$R_3^{\rep,\log}\simeq2.5672\times10^{-5}$.

At four uses the rates are $R_4^{\ach}\simeq5.7400\times10^{-5}$ and
$R_4^{\rep,\log}\simeq4.8913\times10^{-5}$. A simple rounded noise
profile attaining the displayed achieved rate is
\begin{equation}
w=(0.23089,0.33839,0.27248,0.11757,0.04067).
\Label{eq:four_use_profile}
\end{equation}
This profile cannot be a mixture of two binomial distributions. To see
this, let $K=|Z|$ and define
$m_j=\mathbb E[\binom Kj]/\binom4j$ for $0\le j\le4$.
For a mixture of two binomials, $m_j=a\theta^j+(1-a)\varphi^j$, so the
moment matrix $(m_{i+j})_{i,j=0}^2$ has rank at most two. For the exact
decimal distribution \eqref{eq:four_use_profile}, direct rational
evaluation gives
\[
\det(m_{i+j})_{i,j=0}^2>6.8\times10^{-5}>0.
\]
Thus the four-use example uses noise beyond this two-component description.

Across the grid the picture is uniform (Fig.~\ref{fig:rates}(b) and
Fig.~\ref{fig:gains}).  The achieved three-use rate exceeds our numerical estimate of the three-use
repetition benchmark at all 44 grid points at which that estimate
is positive; at the remaining six grid points, the retained rates are zero to
numerical precision.  The absolute gain reaches $7.76\times10^{-6}$ nats per channel use at
$p\simeq0.1226$, and the relative gain grows from about $0.3\%$
at $p=0.121$ to about $24\%$ at $p\simeq0.1244$, the last grid
point at which the benchmark still exceeds $10^{-6}$ nats; beyond that
point both rates are tiny and ratios are not meaningful.  The four-use
rate exceeds our numerical estimate of the four-use benchmark at all 46 points at which the
estimate is positive, with an absolute gain of up to $2.06\times10^{-5}$ nats per
channel use at $p\simeq0.1226$ and a relative gain from about
$1.0\%$ to about $129\%$ at $p\simeq0.1246$.
Table~\ref{tab:three_four_use} gives the values at the same selected grid
points as Table~\ref{tab:two_use}.

\begin{figure}[t]
\centering
\includegraphics[width=0.62\linewidth]{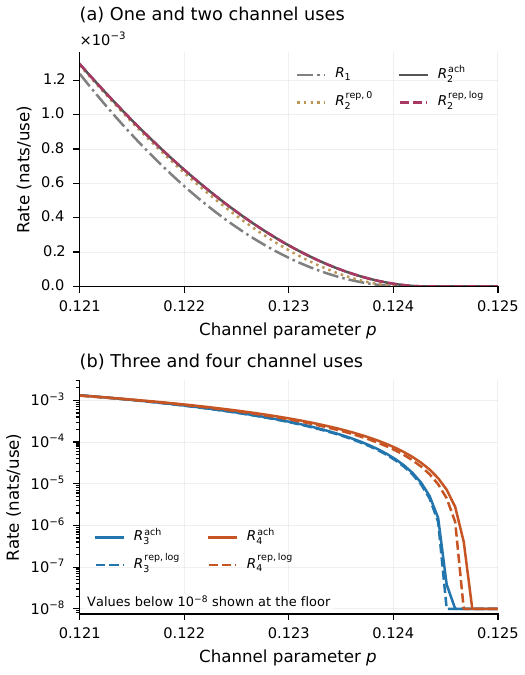}
\caption{Achieved rates and repetition benchmarks on the line $q=q'=p$.
(a) One and two channel uses: the single-use reference $R_1$, the
uniform-logical-input repetition value $R_2^{\rep,0}$, the repetition
benchmark $R_2^{\rep,\log}$ with optimized physical and logical flips, and
the achieved two-use pair value $R_2^{\ach}$, which coincides with
$R_2^{\rep,\log}$.  (b) Three and four channel uses on a logarithmic scale:
the achieved pair values $R_n^{\ach}$ (solid) and the repetition benchmarks
$R_n^{\rep,\log}$ (dashed), with values below $10^{-8}$ drawn at the floor.
The benchmark curves are the numerical estimates of the full benchmark
$R_n^{\rep}$ described in Section~\ref{subsec:benchmarks}.  All curves are
achieved values of explicit distributions, joined by straight lines between
the fifty grid points.}
\Label{fig:rates}
\end{figure}

\begin{figure}[t]
\centering
\includegraphics[width=0.62\linewidth]{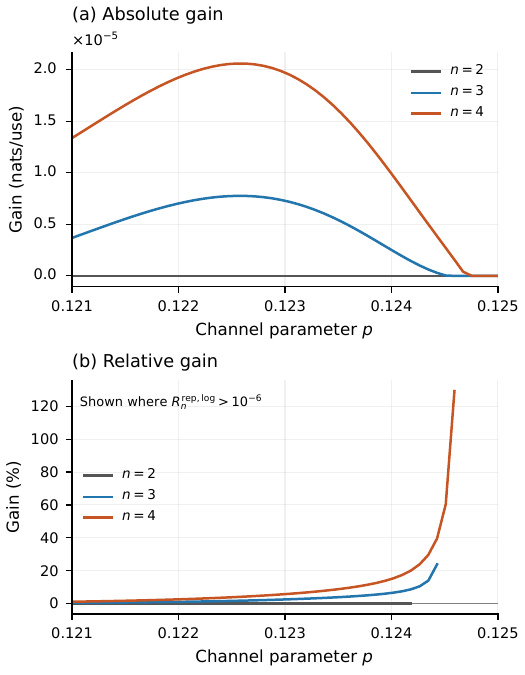}
\caption{Gain of the achieved covariant pair value over the numerical estimate of the repetition
benchmark at the same blocklength, $R_n^{\ach}-R_n^{\rep,\log}$, for $n=2,3,4$:
(a) in nats per channel use, (b) as a percentage of the benchmark, shown
where the benchmark exceeds $10^{-6}$ nats.  The two-use gain is zero to
numerical precision.}
\Label{fig:gains}
\end{figure}

\begin{table}[!htbp]
\centering
\caption{One and two channel uses at selected grid points (nats per channel use, four significant digits).  The last columns give the parameters of the logically prefixed repetition code with which the two-use pair search agrees.}
\Label{tab:two_use}
\begin{tabular}{lcccccc}
\toprule
$p$ & $R_1$ & $R_2^{\rep,0}$ & $R_2^{\rep,\log}$ & $R_2^{\ach}$ & $\lambda$ & $\mu$\\
\midrule
0.121000 & $1.238\times10^{-3}$ & $1.289\times10^{-3}$ & $1.295\times10^{-3}$ & $1.295\times10^{-3}$ & 0.2335 & 0.0148\\
0.121980 & $5.950\times10^{-4}$ & $6.728\times10^{-4}$ & $6.902\times10^{-4}$ & $6.902\times10^{-4}$ & 0.2533 & 0.0374\\
0.122959 & $1.789\times10^{-4}$ & $2.251\times10^{-4}$ & $2.529\times10^{-4}$ & $2.529\times10^{-4}$ & 0.2783 & 0.0843\\
0.123612 & $3.475\times10^{-5}$ & $4.806\times10^{-5}$ & $7.308\times10^{-5}$ & $7.308\times10^{-5}$ & 0.2994 & 0.1534\\
0.124102 & $4.514\times10^{-8}$ & $6.853\times10^{-8}$ & $7.596\times10^{-6}$ & $7.596\times10^{-6}$ & 0.3188 & 0.2768\\
0.124265 & $0$ & $0$ & $5.857\times10^{-7}$ & $5.857\times10^{-7}$ & 0.3262 & 0.3763\\
\bottomrule
\end{tabular}
\end{table}

\begin{table}[!htbp]
\centering
\caption{Three and four channel uses at the same grid points: repetition benchmark, achieved covariant pair value and relative gain.}
\Label{tab:three_four_use}
\begin{tabular}{lcccccc}
\toprule
$p$ & $R_3^{\rep,\log}$ & $R_3^{\ach}$ & gain & $R_4^{\rep,\log}$ & $R_4^{\ach}$ & gain\\
\midrule
0.121000 & $1.320\times10^{-3}$ & $1.324\times10^{-3}$ & 0.3\% & $1.296\times10^{-3}$ & $1.309\times10^{-3}$ & 1.0\%\\
0.121980 & $7.558\times10^{-4}$ & $7.628\times10^{-4}$ & 0.9\% & $7.861\times10^{-4}$ & $8.053\times10^{-4}$ & 2.4\%\\
0.122959 & $3.164\times10^{-4}$ & $3.237\times10^{-4}$ & 2.3\% & $3.643\times10^{-4}$ & $3.842\times10^{-4}$ & 5.5\%\\
0.123612 & $1.150\times10^{-4}$ & $1.199\times10^{-4}$ & 4.2\% & $1.545\times10^{-4}$ & $1.694\times10^{-4}$ & 9.6\%\\
0.124102 & $2.567\times10^{-5}$ & $2.757\times10^{-5}$ & 7.4\% & $4.891\times10^{-5}$ & $5.740\times10^{-5}$ & 17.3\%\\
0.124265 & $9.642\times10^{-6}$ & $1.065\times10^{-5}$ & 10.4\% & $2.585\times10^{-5}$ & $3.201\times10^{-5}$ & 23.8\%\\
\bottomrule
\end{tabular}
\end{table}

For each optimized pair, we checked the achieved rate by forming the
physical tensor-product output states directly. The gains also persist across blocklengths. Among the
repetition codes of length two to four, length three is best at the first
five grid points ($p\le0.1213$) and length four beyond; the three-use
pair exceeds the computed four-use repetition value at the first six grid
points ($p\le0.1214$), and the four-use pair exceeds the numerically computed envelope
$\max_{2\le n\le4}R_n^{\rep,\log}$ at 44 points.  Taken together,
the best pair found at blocklength at most four exceeds this numerical envelope at
every grid point where the envelope is positive.  Lengthening the block and
optimizing its internal structure are complementary routes to higher
rates.

\FloatBarrier
\section{An explicit nonadditivity witness}\Label{sec:witness}

We now prove nonadditivity at the explicit parameter $p=621/5000$.
Convexity of the single-use entropy difference gives the converse, and
exact rational entropy bounds certify a positive two-use rate. The
construction is an explicit instance of the phenomenon established
in \cite{TBR}.

\begin{samepage}
\begin{theorem}[Two-use nonadditivity at $p=621/5000$]\Label{thm:witness}
Consider the hybrid binary channel on the line $q=q'=p$ at $p=621/5000=0.1242$.
Then $C^{(1)}(W_B,W_E)=0$.  Moreover, for the pair
\begin{equation}
P_1=(0.364,\ 0.217,\ 0.217,\ 0.202),\qquad P_3=(0,\ \tfrac12,\ \tfrac12,\ 0)
\Label{eq:simple_witness}
\end{equation}
in the order $(00,01,10,11)$, the two-use rate
$R_2^{\ach}=\tfrac12[\DeltaBE_2(P_3*P_1)-\DeltaBE_2(P_1)]$ satisfies
\begin{equation}
10^{-6}\ <\ R_2^{\ach}\ <\ 1.1\times10^{-6}\quad\text{nats per channel use}.
\Label{eq:witness_interval}
\end{equation}
Consequently $C^{(2)}(W_B,W_E)\ge2R_2^{\ach}>0=2\,C^{(1)}(W_B,W_E)$.
\end{theorem}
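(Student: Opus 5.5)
The proof has two independent parts: a single-use converse and a certified two-use lower bound. Combining them gives the claim through \eqref{eq:finite_block_covariant} and \eqref{eq:superadditive}.

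\emph{Converse.} By Lemma~\ref{lem:binary_single_use}, $C^{(1)}=0$ holds if and only if $\DeltaBE(x;p,p)=f_1(x,p)-f_2(x,p)$ is convex on $[0,1]$. By the symmetry about $x=1/2$ it suffices to work on $[0,1/2]$ and to check that the second derivative is nonnegative there.

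\begin{itemize}
\item Differentiate \eqref{eq:derivative} once more. Bob's contribution is $-(1-2p)^2/(a(1-a))$.
\item For Eve, set $u=4x(1-x)$ and $c=4p(1-p)$, so that $s=\sqrt{1-cu}$.
\item Write $f_2$ as a function $g(u)$ with $g(u)=h\bigl((1-\sqrt{1-cu})/2\bigr)$, using $x\mapsto u$.
\item Use the series $\log\frac{1+s}{1-s}=2\sum_k s^{2k+1}/(2k+1)$, or equivalently the expansion of $h\bigl((1-s)/2\bigr)$ in powers of $s^2$, to express $-f_2''$ as an explicit series with sign-controlled terms.
\end{itemize}

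The inequality $\DeltaBE''\ge0$ then reduces to a one-variable inequality on $x\in[0,1/2]$ at the rational value $p=621/5000$. I would prove it by splitting $[0,1/2]$ into a few subintervals. On each subinterval I would bound Bob's term from below and Eve's term from above using monotonicity in $x$, for example the monotonicity of $a(1-a)$ and of $s$. Near $x=0$, where $s\to1$ and the logarithm blows up, I would treat the endpoint with a separate asymptotic bound.

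This convexity step is the main obstacle. Near the threshold $\hat p\approx0.12412$ the function is only barely convex, so the margin at $p=0.1242$ is tiny and the interval bounds must be sharp. If hand bounds fail, I would fall back on a rigorous interval-arithmetic verification with rational endpoints, combined with an analytic treatment of the neighbourhood of $x=0$.

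\emph{Achievability.} By Theorem~\ref{thm:main}, applied to the product group $\bbZ_2^2$, the pair \eqref{eq:simple_witness} achieves exactly $\tfrac12[\DeltaBE_2(P_3*P_1)-\DeltaBE_2(P_1)]$. The convolution $P_3*P_1$ is the XOR-average of $P_1$ with its translates by $01$ and $10$, which gives the explicit rational vector $(0.217,\,0.283,\,0.283,\,0.217)$.

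\begin{itemize}
\item \emph{Bob's entropies.} These are Shannon entropies of explicit rational distributions, obtained as XOR convolutions with $\Bern(p)^{\otimes2}$.
\item \emph{Eve's entropies.} These are spectra of the $4\times4$ matrices \eqref{NM2}, with Fourier coefficients $c_P(d)$ that are rational.
\item \emph{Block structure.} Because $P_1$ is exchangeable and $P_3*P_1$ is also exchangeable, the swap symmetry block-diagonalizes each matrix into a $3\times3$ symmetric block and a $1\times1$ antisymmetric block. The eigenvalues are therefore roots of an explicit cubic with rational coefficients.
\end{itemize}

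I would isolate these roots in rational intervals by sign changes (Sturm sequences), then bound each term $-\lambda\log\lambda$ using the monotonicity of the logarithm with rational enclosures. The enclosures must be tight enough that the total error stays below about $10^{-8}$, which suffices to place $R_2^{\ach}$ in $(10^{-6},1.1\times10^{-6})$.

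Finally, $C^{(2)}\ge2R_2^{\ach}>0=2C^{(1)}$.
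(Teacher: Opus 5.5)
Your proposal is correct. Its two-use half matches the paper's argument: rational Bob distributions, Eve's spectra obtained by isolating the roots of a rational characteristic polynomial, and series enclosures of the logarithms combined by interval arithmetic. Your exchange-symmetry split peels off the rational eigenvalue $p(1-p)\bigl(1-c_P(11)\bigr)$ and leaves a rational cubic. This is a mild simplification of the paper's step, which isolates all four roots of the quartic of the similar rational matrix $M(P)\operatorname{diag}(w)$.

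The single-use converse is where you take a genuinely different route.

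\emph{Your route.} You propose a certified subdivision of $[0,1/2]$, with interval bounds on $f_1''$ and $f_2''$ and a separate asymptotic treatment of the singular endpoint $x=0$.

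\emph{The paper's route.} The paper needs no subdivision. It uses the variables $a=1-2p$, $u=a^2$, $k=1-u$, $z=1-2x$, $t=z^2$, $r=\sqrt{u+kt}$; note that its $u$ is not your $u$. It computes
$\DeltaBE''(z)=-u/(1-ut)+ku\atanh r/r^3+k^2t/\bigl(r^2(1-r^2)\bigr)$.
Using only that $\atanh(r)/r$ is increasing and that $r\ge a$, it shows that $k\atanh a\ge a^3$ implies
$\DeltaBE''(z)\ge t[k^2-u^3+u^3t]/\bigl((u+kt)(1-t)(1-ut)\bigr)$.
Convexity therefore reduces to the two scalar inequalities $k\atanh a\ge a^3$ and $k^2\ge u^3$. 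Each is checked at $p=621/5000$ by a few exact rational computations.

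\emph{Where the difficulty sits.} The paper's argument also locates the hard point, which your plan leaves unidentified.
The endpoint $x=0$ is harmless: Eve's contribution $-f_2''$ tends to $+\infty$ there, while $f_1''$ stays bounded.
The binding point is $x=1/2$. There, in the $x$-variable, $\DeltaBE''=4(k\atanh a-a^3)/a\approx1.8\times10^{-3}$, against individual terms of size about $2.26$.
The first hypothesis is exactly the condition $\DeltaBE''(1/2)\ge0$, and it is the one that becomes tight near $\hat p\approx0.12412$. The second hypothesis holds with a comfortable margin.

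\emph{Trade-off.} Your route needs no closed-form identity. In exchange, it yields a computer-assisted verification whose mesh must resolve a relative margin below $10^{-3}$ near $x=1/2$. The paper's criterion is short and checkable by hand, and it applies to every $p$ satisfying the two inequalities, not only to the chosen parameter.
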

\end{samepage}

The probabilities in \eqref{eq:simple_witness} have three decimal
places and are interpreted exactly. Its outer distribution is uniform
on the coset $\{01,10\}$, equivalent by translation to the repetition
subgroup. The noise has the form \eqref{eq:logical_flip_P1}, with
$\lambda\simeq0.31834$ and $\mu\simeq0.27706$. The witness therefore
belongs to the logically prefixed repetition family.

\subsection{The single-use converse: convexity of the entropy difference}

Fix $p\in(0,1/2)$ and put
\begin{equation}
a=1-2p,\qquad u=a^2,\qquad k=1-u,\qquad z=1-2x,\qquad t=z^2,\qquad r=\sqrt{u+kt},
\end{equation}
where $x$ is the probability of the symbol $1$.  On the line $q=q'=p$ the
single-use entropy difference \eqref{eq:single_use_Delta} reads, in the
variable $z\in[-1,1]$,
\begin{equation}
\DeltaBE(z)=h\Bigl(\frac{1+az}{2}\Bigr)-h\Bigl(\frac{1+r}{2}\Bigr),
\Label{eq:Delta_z}
\end{equation}
since Bob's output is $\Bern\bigl((1-az)/2\bigr)$ and Eve's eigenvalues are
$(1\pm r)/2$.  Because $z=1-2x$ is an affine change of variable, convexity of
\eqref{eq:Delta_z} in $z$ is the same as convexity of $\DeltaBE$ in $x$.

\begin{lemma}\Label{lem:second_derivative}
For $|z|<1$,
\begin{equation}
\DeltaBE''(z)=-\frac{u}{1-ut}+\frac{ku\,\atanh r}{r^3}+\frac{k^2t}{r^2(1-r^2)}.
\Label{eq:second_derivative}
\end{equation}
\end{lemma}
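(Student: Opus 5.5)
The plan is a direct chain-rule computation, treating the two terms of \eqref{eq:Delta_z} separately. Both are instances of the single function $g(y):=h\bigl(\tfrac{1+y}{2}\bigr)$ on $(-1,1)$, so I will first record its derivatives once. With natural logarithms,
\[
g'(y)=-\tfrac12\log\frac{1+y}{1-y}=-\atanh y,
\qquad
g''(y)=-\frac{1}{1-y^2}.
\]
I will also check the ranges needed for these formulas to apply. For $|z|<1$ we have $t<1$, so $r^2=u+kt<u+k=1$. Also $r\ge\sqrt u=a>0$ since $p<1/2$. Finally $|az|<1$. Hence both arguments lie in the open interval $(-1,1)$, and $r>0$ so that division by $r$ is legitimate.

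For Bob's term $g(az)$, the chain rule gives $\frac{d^2}{dz^2}g(az)=a^2g''(az)=-\frac{u}{1-uz^2}=-\frac{u}{1-ut}$. This is the first term of \eqref{eq:second_derivative}.

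For Eve's term $g(r(z))$ with $r=\sqrt{u+kz^2}$, I will compute the derivatives of $r$:
\[
r'=\frac{kz}{r},
\qquad
r''=\frac{k}{r}-\frac{k^2z^2}{r^3}=\frac{k(r^2-kt)}{r^3}=\frac{ku}{r^3},
\]
where the last step uses $r^2-kt=u$. Then
\[
\frac{d^2}{dz^2}g(r)=g''(r)\,r'^2+g'(r)\,r''=-\frac{k^2t}{r^2(1-r^2)}-\frac{ku\,\atanh r}{r^3}.
\]
Subtracting this from Bob's contribution yields exactly the two remaining terms of \eqref{eq:second_derivative}, with the signs as stated.

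There is no real obstacle here. The only points needing care are the simplification $r^2-kt=u$ in $r''$, which produces the clean $ku/r^3$, and the sign bookkeeping when the Eve term is subtracted. The range checks above ensure that $\atanh r$ and $1/(1-r^2)$ are finite for $|z|<1$.
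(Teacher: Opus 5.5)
Your proposal is correct and is essentially the paper's chain-rule computation. The only cosmetic difference is in Eve's term: you write it as $g''(r)\,r'^2+g'(r)\,r''$ with $r''=ku/r^3$, whereas the paper differentiates $kz\,\atanh r/r$ by the product rule. Both versions use the identity $r^2-kt=u$ at the same step, and your range checks $a\le r<1$ and $|az|<1$ usefully make the finiteness of $\atanh r$ and $1/(1-r^2)$ explicit.
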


\begin{proof}
Using $h'(y)=\log\frac{1-y}{y}$, the first term of \eqref{eq:Delta_z} has
derivative $-a\atanh(az)$ and second derivative $-a^2/(1-a^2z^2)=-u/(1-ut)$.
For the second term, $\frac{d}{dz}h\bigl(\frac{1+r}{2}\bigr)=-\atanh(r)\,r'(z)$
with $r'(z)=kz/r$, so
\begin{align*}
-\frac{d^2}{dz^2}h\Bigl(\frac{1+r}{2}\Bigr)
&=k\,\frac{d}{dz}\Bigl[\frac{z\,\atanh r}{r}\Bigr]
=k\Bigl[\frac{\atanh r}{r}+kt\Bigl(\frac{1}{r^2(1-r^2)}-\frac{\atanh r}{r^3}\Bigr)\Bigr]\\
&=\frac{ku\,\atanh r}{r^3}+\frac{k^2t}{r^2(1-r^2)},
\end{align*}
where we used $\frac{d}{dr}\frac{\atanh r}{r}=\frac{1}{r(1-r^2)}-\frac{\atanh r}{r^2}$
and $r^2-kt=u$.
\end{proof}

\begin{lemma}\Label{lem:convexity_criterion}
If $k\,\atanh a\ge a^3$ and $k^2\ge u^3$, then $\DeltaBE''(z)\ge0$ for
$|z|<1$, and hence $\DeltaBE$ is convex on $[-1,1]$.
\end{lemma}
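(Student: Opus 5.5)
The plan is to bound the only non-rational term in \eqref{eq:second_derivative} from below, using the first hypothesis. This reduces $\DeltaBE''(z)\ge0$ to a polynomial inequality in $(u,t)$, which the second hypothesis should settle.

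\emph{Step 1 (simplify).} Since $1-r^2=1-u-kt=k(1-t)$, the last term of \eqref{eq:second_derivative} becomes
\[
\frac{k^2t}{r^2(1-r^2)}=\frac{kt}{r^2(1-t)} .
\]
For $|z|<1$ we have $0\le t<1$, $a\le r<1$ and $1-ut>0$, so every denominator is positive.

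\emph{Step 2 (the $\atanh$ term).} The function $r\mapsto\atanh(r)/r=\sum_{n\ge0}r^{2n}/(2n+1)$ is increasing on $(0,1)$. Since $r\ge a$, this gives $\atanh r\ge r\,\atanh(a)/a$. The hypothesis $k\,\atanh a\ge a^3$ then yields
\[
\frac{ku\,\atanh r}{r^3}\ \ge\ \frac{u}{r^2}\cdot\frac{k\,\atanh a}{a}\ \ge\ \frac{u\,a^2}{r^2}=\frac{u^2}{r^2}.
\]
I first considered using monotonicity of $\atanh(r)/r^3$ itself, but that function is not monotone. Factoring out $1/r^2$ is the choice that makes the rest work.

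\emph{Step 3 (polynomial inequality).} It now suffices to show
\[
\frac{u^2(1-t)+kt}{r^2(1-t)}\ \ge\ \frac{u}{1-ut}.
\]
Multiply through by the positive quantity $r^2(1-t)(1-ut)$ and use $r^2=u+kt$. The claim becomes
\[
\bigl(u^2(1-t)+kt\bigr)(1-ut)-u(1-t)(u+kt)\ \ge\ 0 .
\]
Expanding, the $u^2(1-t)$ terms cancel and the $ukt^2$ terms cancel. What remains is
\[
t\bigl[k-uk-u^3(1-t)\bigr]=t\bigl[k^2-u^3(1-t)\bigr],
\]
where we used $k-uk=k(1-u)=k^2$. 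Because $0\le t<1$, this is at least $t(k^2-u^3)$, which is $\ge0$ by the second hypothesis.

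Hence $\DeltaBE''(z)\ge0$ on $(-1,1)$. Since $\DeltaBE$ is continuous on $[-1,1]$, it is convex on the closed interval. The expected main obstacle is Step 2: one must choose the lower bound for $\atanh r$ so that the leftover algebra closes exactly. Once that bound is fixed, the rest is the cancellation in Step 3.
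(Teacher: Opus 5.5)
Your proof is correct and takes essentially the same route as the paper. You use the monotonicity of $\atanh(r)/r$ with $r\ge a$ to bound the $\atanh$ term below by $u^2/r^2$, rewrite the last term via $1-r^2=k(1-t)$, and reduce to the nonnegative numerator $t\,[k^2-u^3+u^3t]$ over a positive denominator. The only difference is cosmetic: you clear denominators explicitly, which checks the ``direct expansion'' that the paper leaves to the reader.
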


\begin{proof}
The function $\atanh(r)/r=\sum_{j\ge0}r^{2j}/(2j+1)$ is increasing on
$(0,1)$, and $r\ge a$ because $r^2=u+kt\ge u$.  Hence
$\atanh(r)/r\ge\atanh(a)/a$, and with the first hypothesis
\[
\frac{ku\,\atanh r}{r^3}=\frac{ku}{r^2}\cdot\frac{\atanh r}{r}\ \ge\ \frac{ku}{r^2}\cdot\frac{\atanh a}{a}\ \ge\ \frac{u\,a^2}{r^2}=\frac{u^2}{r^2}.
\]
Inserting this into \eqref{eq:second_derivative} and using
$1-r^2=k(1-t)$, so that $\frac{k^2t}{r^2(1-r^2)}=\frac{kt}{r^2(1-t)}$, we get
\[
\DeltaBE''(z)\ \ge\ -\frac{u}{1-ut}+\frac{u^2}{u+kt}+\frac{kt}{(u+kt)(1-t)}
=\frac{t\,[\,k^2-u^3+u^3t\,]}{(u+kt)(1-t)(1-ut)},
\]
where the last equality is a direct expansion over the common denominator.
All denominators are positive for $|z|<1$, and the numerator is
nonnegative when $k^2\ge u^3$.  Continuity extends convexity to the closed
interval.
\end{proof}

\begin{proof}[Proof of Theorem~\ref{thm:witness}, single-use part]
At $p=621/5000$ one has $a=1879/2500$, and the two hypotheses of
Lemma~\ref{lem:convexity_criterion} follow from finite rational
computations.  Since the Taylor series of $\atanh$ has positive
coefficients,
\begin{gather*}
\atanh a\ >\ \sum_{j=0}^{8}\frac{a^{2j+1}}{2j+1}\ >\ \frac{122}{125},\qquad
k\cdot\frac{122}{125}-a^3>\frac{7}{100000},\\
k^2-u^3>\frac{9}{1000},
\end{gather*}
so $k\atanh a>a^3$ and $k^2>u^3$.  Hence $\DeltaBE$ is convex in $x$ on
$[0,1]$, its lower convex envelope is $\DeltaBE$ itself, and
Lemma~\ref{lem:binary_single_use} gives $C^{(1)}=0$.  Equivalently, for
every auxiliary-variable construction Jensen's inequality gives
$I(V;B)-I(V;E)=\DeltaBE(P_X)-\sum_vP_V(v)\DeltaBE(P_{X|V=v})\le0$, and a
constant $V$ attains zero.
\end{proof}

The criterion is consistent with the single-use analysis of \cite{TBR}
and establishes the converse here by finite rational inequalities.

\subsection{Certified positivity of the two-use rate}

\begin{proof}[Proof of Theorem~\ref{thm:witness}, two-use part]
We bound the two-use rate by enclosing the output entropies in rational
intervals. The convolution is
$P_2=P_3*P_1=(0.217,\,0.283,\,0.283,\,0.217)$.
Bob's output distributions for $P_1$ and $P_2$ are rational, since they
are convolutions with two independent $\Bern(p)$ flips. Their entropies
are therefore sums of terms $-y\log y$ with rational $y$.

For Eve, the output state is $DM(P)D$ as in
\eqref{eq:eve_two_copy_structure}, where
$D=\operatorname{diag}(\sqrt{w_j})$ and $M(P)_{jk}=c_P(j\oplus k)$.
This state is similar to the rational matrix
$M(P)\operatorname{diag}(w)$. Write
$\chi_i(t):=\det[tI-M(P_i)\operatorname{diag}(w)]$, and set
\[
\ell_1=(0.014,\,0.094,\,0.117,\,0.774),\qquad
\ell_2=(0.015,\,0.094,\,0.123,\,0.767).
\]
For each $i=1,2$ and $j=1,\ldots,4$, exact substitution shows that
$\chi_i$ has sign $(-1)^{j-1}$ at $\ell_{i,j}$ and the opposite sign at
$\ell_{i,j}+10^{-3}$. These four disjoint intervals therefore isolate
all four eigenvalues.
Seventy bisections reduce each width to $10^{-3}2^{-70}<10^{-24}$.

To enclose the logarithms, write each argument as $2^m\varrho$ with
$1\le\varrho\le2$, and use
\[
\log\varrho=2\sum_{j=0}^{29}\frac{s^{2j+1}}{2j+1}+\mathcal R,
\qquad s=\frac{\varrho-1}{\varrho+1},\qquad
0\le\mathcal R\le\frac{2s^{61}}{61(1-s^2)}.
\]
The same formula bounds $\log2$. Interval multiplication then bounds
$-y\log y$, without requiring it to be monotone on the interval.
Exact rational evaluation gives the enclosures
\begin{align*}
0.6360630&<\DeltaBE_2(P_1)<0.6360631,\\
0.6360651&<\DeltaBE_2(P_2)<0.6360652.
\end{align*}
Taking half their difference proves \eqref{eq:witness_interval}.
\end{proof}

\section{From BPSK to the hybrid binary channel}\Label{sec:bpsk}

The hybrid binary channel arises from quantum BPSK when Bob uses a
fixed binary measurement while Eve keeps her quantum output. We first
identify the two induced channels and then give a beam-splitter realization.

\subsection{Eve's channel}\Label{subsec:bpsk_to_eve}
Keep $\calH_E$, $\rho_E=\ket{\alpha'}\!\bra{\alpha'}$ and the parity
representation $U_E$ of Section~\ref{subsec:qbpsk}, and for $\alpha'\ne0$ define the even and
odd cat-type states
\begin{align}
\ket{\mathrm{even}'}&:=(\cosh|\alpha'|^2)^{-1/2}\sum_{m\ge0}\frac{\alpha'^{\,2m}}{\sqrt{(2m)!}}\ket{2m},\\
\ket{\mathrm{odd}'}&:=(\sinh|\alpha'|^2)^{-1/2}\sum_{m\ge0}\frac{\alpha'^{\,2m+1}}{\sqrt{(2m+1)!}}\ket{2m+1}.
\end{align}
By \cite[Eq.~(12)]{HV},
\begin{equation}
\ket{\pm\alpha'}=\sqrt{a_{e|\alpha'}}\,\ket{\mathrm{even}'}\pm\sqrt{a_{o|\alpha'}}\,\ket{\mathrm{odd}'},
\qquad
a_{e|\alpha}:=e^{-|\alpha|^2}\cosh|\alpha|^2,\quad a_{o|\alpha}:=e^{-|\alpha|^2}\sinh|\alpha|^2.
\end{equation}
The two states therefore live in the two-dimensional space spanned by
$\ket{\mathrm{even}'}$ and $\ket{\mathrm{odd}'}$, on which the parity acts as
$\operatorname{diag}(1,-1)$.  Setting $q':=a_{o|\alpha'}<\tfrac12$
identifies Eve's channel with the one of Section~\ref{subsec:hybrid_binary}
after exchanging the even and odd basis vectors; the resulting sign change
of the parity operator does not affect its conjugation action.
This is also the invariant subspace referred to after Lemma~\ref{L3}.
At zero amplitude the two signal states coincide and the odd-cat
normalization above is not used; this degenerate case is treated directly
as a constant-output channel.

\subsection{Bob's channel}\Label{subsec:bpsk_to_bob}
For $\alpha\ne0$, define $\ket{\mathrm{even}}$ and $\ket{\mathrm{odd}}$ in the same way from
$\alpha$ and let $\ket\pm:=(\ket{\mathrm{even}}\pm\ket{\mathrm{odd}})/\sqrt2$.
For equal prior probabilities of the two symbols, the projective
measurement $\{\ket+\!\bra+,\ket-\!\bra-\}$ is the Helstrom measurement that
minimizes the error probability of distinguishing $\ket\alpha$ from
$\ket{-\alpha}$ \cite{Helstrom}.  Its success probability is
\begin{equation}
|\langle\alpha|+\rangle|^2
=\frac12\Bigl(\sqrt{a_{e|\alpha}}+\sqrt{a_{o|\alpha}}\Bigr)^2
=\frac12\Bigl(1+2\sqrt{a_{e|\alpha}a_{o|\alpha}}\Bigr)
=\frac12\Bigl(1+\frac{\sqrt{2\sinh2|\alpha|^2}}{e^{|\alpha|^2}}\Bigr).
\end{equation}
If Bob fixes this measurement, the channel from Alice to him is a binary
symmetric channel with crossover probability
\begin{equation}
q=1-|\langle\alpha|+\rangle|^2=\frac12\Bigl(1-\frac{\sqrt{2\sinh2|\alpha|^2}}{e^{|\alpha|^2}}\Bigr),
\end{equation}
which is Bob's channel in Section~\ref{subsec:hybrid_binary}.  We emphasize
that this receiver-measurement-induced model is not the unrestricted BPSK
wiretap channel, in which Bob keeps his quantum output and may measure
collectively across channel uses; it is the cq channel seen by a receiver
with a fixed symbol-by-symbol detector. If $\alpha=0$, Bob's
two states coincide and the induced binary channel is treated directly,
without defining an odd-cat vector.

\subsection{A physical realization}\Label{subsec:physical}
Let Alice's system be one bosonic mode and consider a lossless beam
splitter network with three output modes, Bob's $\calH_B$, Eve's
$\calH_E$ and an environment $\calH_R$, acting on coherent states by the
isometry
\begin{equation}
V\ket\beta_A=\ket{a_B\beta}_B\ket{a_E\beta}_E\ket{a_R\beta}_R,\qquad |a_B|^2+|a_E|^2+|a_R|^2=1.
\end{equation}
With $U_{A,0}=I$ and $U_{A,1}=\sum_n(-1)^n\ket n\!\bra n$ on Alice's mode,
\begin{equation}
V\bigl(U_{A,1}\ket\beta_A\bigr)=V\ket{-\beta}_A
=\bigl(U_{B,1}\ket{a_B\beta}_B\bigr)\bigl(U_{E,1}\ket{a_E\beta}_E\bigr)\ket{-a_R\beta}_R,
\end{equation}
so with $\alpha:=a_B\beta$ and $\alpha':=a_E\beta$ the pair of channels to
$B$ and $E$ is the quantum BPSK example, provided the environment mode $R$
is inaccessible to Eve (or $a_R=0$).  If Eve could also collect $R$, her
two states would be $\ket{\pm a_E\beta}\ket{\pm a_R\beta}$, with effective
energy $(|a_E|^2+|a_R|^2)|\beta|^2$, and the parameter $q'$ above would have
to be computed from that energy instead.

\section{Conclusion}\Label{sec:conclusion}

Covariance gives a direct expression for the private information of a
translated-noise encoder through group convolution. The construction
applies at every finite blocklength and allows correlations in both
distributions. Constant-composition coding and balanced hashing achieve
these rates for arbitrary outer distributions; uniform randomization
allows a further reduction through Eve's representation.

For the hybrid binary channel, optimizing the sender's noise beyond
independent physical flips and a symmetric logical flip improves our numerical estimate of the optimized
repetition benchmark at three and four uses. The explicit
two-use witness also gives a rigorous separation between the single-use
private information and the private capacity. The optical realization
identifies a receiver setting in which this hybrid model arises.

Further questions include exactness of the covariant bound for additional
channel classes, the behavior of correlated noise at larger blocklengths
and for groups such as $\bbZ_d$, and refinements of the secrecy bound at
finite coding blocklength.

\begin{acknowledgments}
M.H. acknowledges support in part from the Guangdong Provincial Quantum
Science Strategic Initiative (Grant No.~GDZX2505003) and the General
R\&D Projects of the 1+1+1 CUHK--CUHK(SZ)--GDST Joint Collaboration Fund
(Grant No.~GRDP2025-022). F.S. acknowledges support from the European
Commission through a Marie Sk{\l}odowska-Curie Global Fellowship. The core ideas in this paper were developed by the authors well before the widespread use of AI in the research community. AI was utilized to assist with the writing, numerical analyses, manuscript preparation, and other minor technical tasks.
\end{acknowledgments}

\appendix

\section{Choi-operator form of covariance}\Label{app:choi}

For a quantum channel $\calN:\calB(\mathsf A)\to\calB(\mathsf B)$,
covariance means
$\calN(U_g\rho U_g^\dagger)=V_g\calN(\rho)V_g^\dagger$.
Fix $\ket\Phi=d^{-1/2}\sum_i\ket i\ket i$ and the normalized Choi operator
$J(\calN)=(\mathrm{id}\otimes\calN)(\ket\Phi\!\bra\Phi)$.
The identity $(I\otimes X)\ket\Phi=(X^T\otimes I)\ket\Phi$ gives
\begin{equation}
J(\calN)=(U_g^T\otimes V_g^\dagger)J(\calN)
                    (U_g^T\otimes V_g^\dagger)^\dagger.
\Label{eq:choi_covariant_conj}
\end{equation}
Equivalently, $J(\calN)$ commutes with $U_g^*\otimes V_g$ for every $g$.
Changing the basis conjugates this condition, and exchanging the Choi
factors exchanges the representation factors. For covariance under the
full unitary group with $V_g=U_g$, the commutant is spanned by $I$ and
$\ket\Phi\!\bra\Phi$, giving the isotropic Choi family of depolarizing
channels. For cq channels, the input action is the permutation action of
Section~\ref{subsec:cq_covariance}.

\section{Numerical methods}\Label{app:numerics}

\subsection{Grid, conventions and verification}
All computations of Section~\ref{sec:TBR} use natural logarithms and the
exact grid $p_k=(5929+4k)/49000$, $k=0,\dots,49$. Distributions on
$\bbZ_2^n$ are indexed in binary-integer order, most significant bit first;
for $n=2$ the order is $(00,01,10,11)$. Each optimized pair is normalized
before evaluating its rate. Computations use double precision, with
$0\log0=0$ in the entropy calculation; negative eigenvalues within
roundoff are set to zero, and larger violations are rejected.

We checked all $150$ achieved rates by constructing the physical
tensor-product output states directly, independently of the exchangeable
reduction. The largest observed discrepancy was
$1.2\times10^{-15}$ nats per use. This numerical consistency check is
separate from the exact rational bounds in Section~\ref{sec:witness}
and the moment determinant in \eqref{eq:four_use_profile}.

\subsection{Search and benchmark evaluation}
At each grid point and each $n\in\{2,3,4\}$ we begin with the optimized
logically prefixed repetition code. Exchangeable noise is then optimized
by analytic-gradient SLSQP, starting from this code, the preceding grid
point's solution and eight random Dirichlet profiles. The best candidate
is refined over both full probability vectors $(P_1,P_3)$ by SLSQP.
For $n=2$, we also use differential evolution over eight positive raw
parameters, normalized separately to obtain $P_1$ and $P_3$, with
population multiplier $8$ and at most $120$ generations, followed by
full-pair local refinement. Improvements no larger than
$2\times10^{-13}$ nats per use are treated as ties, in which case the
simpler repetition candidate is retained. These searches give feasible
pairs and achievable rates, without a global optimality assertion.

For $R_n^{\rep,0}$ we scan $1001$ values of $\lambda\in[0,1/2]$ and
refine the best by bounded scalar optimization. For
$R_n^{\rep,\log}$ we use analytic-gradient L-BFGS-B from $42$ starting
points and the preceding grid point's solution. To evaluate the full
benchmark $R_n^{\rep}$, we sample the binary-prior function in
\eqref{eq:R_rep_envelope}, compute its lower convex hull, and maximize
the gap. The physical flip probability is scanned at $51$ points over
$[0,1/2]$, with $201$ prior samples; the optimized logical-flip parameter
is also included. Bounded scalar refinement uses $801$ prior samples,
and the final evaluation uses $2001$. At every positive point the
maximizing prior is $x=1/2$, and the full benchmark agrees with
$R_n^{\rep,\log}$ to within $7.1\times10^{-9}$ nats per use. The
figures and tables therefore use the latter as the numerical estimate
of $R_n^{\rep}$.

\end{document}